\documentclass[conference]{IEEEtran}
\IEEEoverridecommandlockouts
\usepackage{cite}
\usepackage{amsmath,amssymb,amsfonts}
\usepackage{textcomp}
\usepackage{xcolor}
\usepackage[linesnumbered,ruled,vlined]{algorithm2e}
\usepackage{soul} 
\usepackage{graphicx, tcolorbox}
\usepackage{makecell}
\usepackage{subcaption}
\usepackage{booktabs}
\usepackage{url}
\usepackage{hyperref}
\usepackage{cleveref}      
\usepackage[font=small]{caption}

\usepackage{amsthm}
\newtheorem{lemma}{Lemma}
\newtheorem{theorem}{Theorem}

\def\BibTeX{{\rm B\kern-.05em{\sc i\kern-.025em b}\kern-.08em
    T\kern-.1667em\lower.7ex\hbox{E}\kern-.125emX}}
\begin{document}

\title{Fully Dynamic Rooted Spanning Tree on GPU}

\author{
\IEEEauthorblockN{Abhijeet Sahu, 
Harmit Singh,
Soham Nandy,
G. Ramakrishna}

\IEEEauthorblockA{Indian Institute of Technology Tirupati, India, 517619\\
\{cs22s501, cs20b012, cs20b046, rama\}@iittp.ac.in}}

\maketitle

\begin{abstract}
Spanning trees are fundamental structures in graph theory, essential for various applications such as network maintenance, routing adjustments, and many more. 
The dynamic nature of real-world networks requires efficient updates to these structures as the underlying graph evolves. Maintaining rooted spanning trees dynamically is particularly crucial for algorithms addressing 2-connected components and minimum-weighted spanning trees. In this paper, we address the challenge of maintaining a rooted spanning forest when a batch of edges are inserted or deleted. We present four novel fully dynamic parallel algorithms to update the spanning forest without reconstructing it from scratch. To the best of our knowledge, parallel algorithms for this problem remain largely unexplored. Our experiments on a diverse collection of real-world graphs using a \textsc{gpu} environment demonstrate a throughput of 2 million insertions and 1.4 million deletions per second, significantly outperforming state-of-the-art parallel static algorithms.
\end{abstract}

\begin{IEEEkeywords}
Dynamic graph, fully dynamic, batch updates, spanning tree
\end{IEEEkeywords}

\section{Introduction}
The efficient processing of large graphs is increasingly vital across various computational domains, as recent studies underscore the growing complexity and scale of networks~\cite{sakr2021future}. The rise of online social networks and other dynamic environments demands rapid computations of various graph metrics~\cite{sajal_das_mst}. Real-world networks, such as those in transportation, biology, and social platforms, experience continuous changes or $``churn"$. For instance, on a popular social media platform $``Reddit"$, users frequently join and leave $subreddits$, leading to constant additions and deletions of connections within the underlying network.
These frequent updates pose a significant challenge: efficiently updating vital graph metrics as the graph undergoes changes. 

Traditional methods of analysing static snapshots of these networks fall short, given the rapid pace of these changes. This problem has created a new demand for the development of algorithms tailored for dynamic graphs, where edges can be frequently inserted or deleted. A common approach is to accumulate updates into batches that can be processed in parallel. This approach improves performance as it reduces redundant computations and increases parallelism, although it introduces its own set of challenges.

\noindent Spanning trees are fundamental to many graph-related applications, playing a critical role in fields like telecommunications, the Internet of Things, transportation, and more. Many crucial network tasks, such as database maintenance, can be executed efficiently using a tree that spans the network. Maintaining spanning trees efficiently in dynamic networks is more practical and resource-efficient than rebuilding them from scratch after each update. This approach not only accelerates tree construction, but also significantly improves the performance of dependent dynamic algorithms such as biconnected components (\textsc{bcc}), minimum weight spanning Tree (\textsc{mst}), centrality measures, and others.

For example, \textbf{updating a rooted spanning tree} is required as a subroutine in dynamic batch updates of \textsc{mst} \cite{sajal_das_mst} and biconnected components \textsc{bcc} \cite{chirayu_bcc} algorithms. However, the current \textsc{mst} algorithm resorts to completely rebuilding the tree, even when updates to an existing rooted spanning tree would suffice. In the dynamic \textsc{bcc} algorithms, the dynamic maintenance of a rooted spanning tree is required. While some level of parallelism is achieved by processing multiple broken trees concurrently, but the path reversal within each tree is done serially by a single thread. Although this coarse-grained parallelism suits multicore \textsc{cpu}, it is not ideal for modern \textsc{gpu}s, where fine-grain parallelism is expected. Additionally, the unique memory hierarchy of \textsc{gpu}s, coupled with the irregular data access patterns in graphs, necessitates careful design of data layouts and memory accesses. Thus, designing graph algorithms for \textsc{gpu}s requires meticulous attention to both parallelism and memory management.


In this paper, we propose \textsc{gpu} based parallel algorithms to  dynamically update  a rooted spanning tree or forest following the insertion or deletion of a batch of edges. We now formally define the problem.


\begin{tcolorbox}[colframe=black, colback=white, sharp corners, boxrule=0.4mm, width=\linewidth,
                  title=\textbf{Dynamic Rooted Spanning Forest }(\textsc{d-rsf})]
\textbf{Input}: An undirected graph $G=(V,E)$, a rooted spanning forest $F$ of $G$, a batch $B$ of edges, 
and an operation $\mathsf{op} \in \{ \mathsf{insert}, \mathsf{delete} \}.$

\textbf{Output}: A rooted spanning forest $F'$ of the updated graph $G'$, where
\[
  G' = 
    \begin{cases}
      G + B & \text{if }\text{op}=\text{insert},\\
      G - B & \text{if }\text{op}=\text{delete}.
    \end{cases}
\]
\end{tcolorbox}

Given a graph $G$ with $V$ vertices and $E$ edges, and a rooted spanning forest $F$ of $G$, and a batch $B$ of edges to be inserted or deleted according to an operation $\mathbf{op} \in \{\mathsf{insert}, \mathsf{delete}\}$ the goal is to efficiently maintain \( F \) through these dynamic updates. Specifically, the aim is to update the spanning forest dynamically from the current solution, rather than reconstructing the spanning forest from scratch after each operation. 

\subsection{Related Work}
Over the decades, spanning trees have been widely studied across various computational models, with recent focus shifting towards dynamic spanning trees due to the need for efficient algorithms capable of handling dynamic updates. \\
\noindent \textbf{Static Algorithms.} Merrill et al. \cite{merrill2012scalable}, proposed a work-efficient BFS algorithm using the Prefix Sum technique. While BFS-based methods are efficient for graphs with small diameters, they suffer from performance degradation as the graph depth increases \cite{emc}.

Beyond BFS, different parallel algorithms for constructing spanning trees have also been explored. Cong et al. \cite{pr_rst} proposed a multicore algorithm based on the parallel Union-Find technique, combining grafting and broadcasting to reverse paths and construct a rooted spanning tree. However, the GPU implementation of \textsc{pr-rst} algorithm experiences computational overhead due to multiple rounds of pointer jumping, leading to increased complexity and irregular memory access patterns.

A fundamental limitation shared by all static approaches is that they do not exploit the existing spanning forest when the graph is updated. Even for small changes (e.g., 1--2\% of edges), a full recomputation over the entire graph is required, making them ill-suited for dynamic settings.

\noindent \textbf{Sequential Dynamic Algorithms.}
 To address this issue, Eppstein et al. proposed the \emph{Sparsification Technique}, a widely recognized paradigm in sequential settings for designing dynamic graph algorithms \cite{eppstein1997sparsification}, applicable to connected components, biconnected components, and minimum weight spanning trees. Nevertheless, the sparsification technique has notable limitations, including its inherent sequential nature and its focus on single-edge updates. Although this approach has been successfully adapted for dynamic connected components \cite{srinivasan2016application}, it proved unsuitable for maintaining minimum weight spanning trees in multicore architectures \cite{sajal_das_mst}, likely due to associated overheads. Furthermore, empirical studies by Haryan et al. \cite{chirayu_bcc} demonstrated that the sparsification technique falls short in practice compared to parallel static algorithms for maintaining biconnected components. The exploration of the applicability of the sparsification paradigm to GPU architectures still remains limited, presenting an open research avenue.

\noindent \textbf{Parallel Dynamic Algorithms.}  
To overcome the limitations of sequential dynamic algorithms, parallel batch dynamic algorithms have been explored for maintaining spanning forests under batch updates. An algorithm is referred to as \emph{fully dynamic} if it handles both edge insertions and deletions, whereas \emph{incremental} and \emph{decremental} algorithms support only insertions or deletions, respectively.  
To the best of our knowledge, three notable contributions on fully dynamic spanning forests are \cite{hipc_2013, spaa_2019, archiv_2024}.

The HDT (Holm, de Lichtenberg, and Thorup) algorithm \cite{spaa_2019} and the cluster forest algorithm \cite{archiv_2024} provide theoretical guarantees on algorithmic depth, but their practical adaptation to specific architectures such as multi-core processors or \textsc{gpu}s remains unexplored. In particular, adapting these algorithms to \textsc{gpu} architectures is non-trivial and remains an open research challenge. A key limitation of the HDT algorithm is its requirement to maintain $O(\log V)$ spanning forests within a hierarchical data structure, leading to a space complexity of $O(V \log V + E)$.

The cluster forest algorithm \cite{archiv_2024} performs two independent graph traversals in parallel from the endpoints of each deleted edge. While this kind of parallelism is effective on multi-core architectures, it is not well-aligned with the parallel execution model of \textsc{gpu}s due to their architectural constraints. Similarly, the parallel algorithm proposed in \cite{hipc_2013}, though implemented for multi-core systems, lacks theoretical guarantees on depth. This algorithm is based on a static parallel \textsc{bfs}, with its worst-case depth bounded by $O(V + E)$, where $V$ and $E$ denote the number of vertices and edges in the graph.

In summary, existing approaches either prioritize theoretical guarantees or focus on architecture-specific optimizations, but not both. In contrast, our work aims to bridge this gap by designing algorithms that offer theoretical guarantees while being well-suited for practical deployment on \textsc{gpu} architectures.
A comprehensive comparison of existing works with our proposed approach is presented in \textbf{Table ~\ref{table:comparison}}.

\begin{table}[h!]
\centering
\scriptsize
\caption{Depth comparison of algorithms for batch edge operations in dynamic graph algorithms, with \textsc{sg-et} and \textsc{hs-et} representing our proposed contributions. Here, $V$, $E$, and $k$, denote the number of vertices, edges, and trees in the underlying forest respectively.}
\begin{tabular}{@{}lccc@{}}
\toprule
\textbf{Algorithms} & \textbf{Batch-Insert} & \textbf{Batch-Delete} & \textbf{Space} \\ \midrule
\textbf{HDT}~\cite{spaa_2019}          & $O(\log V)$              & $O(\log^3 V)$            & $O(V\log V + E)$ \\
\textbf{CF}~\cite{archiv_2024} & $O(\log^3 V)$            & $O(\log^3 V)$            & $O(V + E)$ \\
\textbf{SG-ET} (ours) & $\boldsymbol{O(\log V + \log^2 k)}$ & $\boldsymbol{O(\log V + \log^2 k)}$ & $\boldsymbol{O(V + E)}$ \\
\textbf{HS-ET} (ours) & $\boldsymbol{O(\log^2 V)}$          & $\boldsymbol{O(\log^2 V)}$          & $\boldsymbol{O(V + E)}$ \\ \bottomrule
\end{tabular}
\label{table:comparison}
\end{table}

\noindent \textbf{Past work on Dynamic Trees.}
Dynamic spanning tree problems and dynamic trees are related, but they have different goals. Dynamic trees specifically deal with maintaining a forest of trees, and support insert and delete operations on edges, while ensuring the structure remains \textbf{acyclic}. In contrast, dynamic spanning tree problems operate on general graphs, maintaining a spanning tree while accommodating edge insertions and deletions, making them inherently more complex.

The concept of dynamic trees was pioneered by Sleator and Tarjan ~\cite{tarzan_link_cut} (1981) with the introduction of link/cut trees for fast network flow applications, later adapted by Frederickson ~\cite{frederickson1983data} (1983) for dynamic MST updates. Following that, subsequent researchers have expanded this area extensively in both sequential and parallel settings~\cite{henzinger1999randomized,tseng2019batch}.

\subsection{Our Contributions}
Designing parallel dynamic algorithms for \textsc{d-rsf}, where the algorithm’s depth depends on the number of affected key edges ($m$) or vertices ($n$), poses significant challenges. The algorithms presented in this paper directly address these challenges. Our design focuses on achieving low-depth parallel algorithms to ensure practical efficiency, particularly on \textsc{gpu} architectures. To the best of our knowledge, there are no existing \textsc{gpu}-based parallel algorithms for maintaining rooted spanning forests under dynamic updates. Our key contributions are summarized below:

\begin{itemize} \itemsep0em

\item We design and implement four fully dynamic parallel algorithms for the \textsc{d-rsf} problem, specifically tailored for \textsc{gpu} architectures. We formally prove the correctness of our algorithms and demonstrate their practical efficiency in practical large-scale graph settings.

\item We design a \textsc{gpu}-based parallel subroutine to reverse multiple paths in a rooted spanning forest, and integrate ideas from Hooking-Shortcutting, Broadcasting and Euler-Tour techniques to solve the \textsc{d-rsf} problem.

\item Our \textsc{gpu}-based parallel algorithms significantly outperform existing state-of-the-art static parallel \textsc{gpu} algorithms, achieving speedups of up to 500$\times$ for deletions and up to 900$\times$ for insertions.

\item Beyond maintaining rooted spanning forests, our proposed algorithms can also be applied to maintain the connected components of a graph under dynamic updates, including both edge insertions and deletions.

\end{itemize}
\subsection{Preliminaries.}
In this section we begin by discussing key existing algorithms that provide the necessary context for understanding our algorithm.

\noindent\textbf{Hooking and Shortcutting (\textsc{hs})}\cite{hooking_shortcutting}. This is a parallel algorithm inspired by the union-find algorithm for computing connected components and a spanning forest in logarithmic time.
\textcolor{black}{This algorithm receives a sequence of edges as input. Conceptually, \textsc{hs} maintains a forest of rooted trees that represent the components identified so far. Each parallel round of the algorithm comprises of two alternating phases, namely Hooking followed by Shortcutting.  In the Shortcutting phase, every vertex in a tree ``jumps'' two levels up its tree (pointer doubling), effectively collapsing paths toward the root and flattening the tree structure. In the hooking step, all the intra-edges of rooted trees are discarded. Later, for each edge $(u,v)$ that connects two distinct trees, one tree’s root is made a child of the other, thereby merging components. By careful examination, the formation of cycles is avoided. The edges that were not successfully considered, due to parallelism and race condition will participate in the subsequent rounds. The algorithm alternates between two operations till all the edges are processed successfully. }

\noindent \textbf{Euler Tour.}\cite{emc}. 
\textcolor{black}{For an unrooted tree, a directed graph $T'$ can be obtained by replacing each edge $(u,v)$ with two directed edges: $(u,v)$ and $(v,u)$. We define the Euler tour of such an unrooted tree as a path of directed edges in $T'$ that visits each edge exactly once and returns to the starting vertex. 
For an Euler tour on a sequence $(e_1, e_2 \ldots, e_{2\times (n-1)})$ of edges, for each $1 \leq i \leq 2\times (n-1)$, $rank$ of $e_i$ is defined as its index $i$.
We use a well-known parallel algorithm technique called the \emph{Euler Tour Technique}~\cite{emc} to transform an unrooted tree into a rooted tree, constructing an Euler tour, and computing the ranks of all the edges in of Euler tour.}

\section{Generic Algorithm for D-RSF}
In this section, we begin by introducing the necessary notations that are used throughout the description of our \textsc{d-rsf} algorithm. We then present an overview of the algorithm, followed by two important subroutines to address the subproblems of \emph{oriented replacement edges} and \emph{path reversal}. Finally, we describe the complete generic algorithm using the introduced notation and subroutines.

\subsection{Notation}{
Let $G=(V(G),E(G))$ be an input undirected graph, where $V(G)$ and $E(G)$ denote the set of vertices and edges, respectively in $G$.
A forest $F = (V(F), E(F))$ of $G$ is said to be a \emph{rooted spanning forest} of $G$ if $V(F) = V(G)$ and for every vertex $v$ in $F$, there exists a parent except for one vertex in each component. The vertices that do not have parents correspond to \emph{root vertices}. For an edge $e = (x, y)$ in a rooted tree, where $y$ is the parent of $x$, we say that the \emph{orientation} of $e$ is from $x$ to $y$ $(x \rightarrow y)$.

\begin{table}[ht]
\centering
\small
\caption{Notations}
\begin{tabular}{|c|l|}
\hline
\textbf{Notation} & \textbf{Description} \\ \hline
$G$ & An undirected graph \\ \hline
$V(G)$, $n$ & The number of vertices in $G$ \\ \hline
$E$ & The number of edges in $G$\\ \hline
$E(G)$ & The set of edges in $G$ \\ \hline

$F$ & A rooted spanning forest of $G$ \\ \hline
$x \rightarrow y$ & \makecell[l]{Orientation of an edge from child $x$ \\ to its parent $y$ in $F$} \\ \hline
$m$ & The number of key edges \\ \hline
$k$ & The number of components (trees) in $F$\\ \hline

$G'$ & Graph $G$ after a batch update \\ \hline
 $F'$&A rooted spanning forest of $G'$\\ \hline

$\hat{F}$ & \makecell[l]{
$\hat{F} = F - B$ for deletion operation;\\
$\hat{F} = F$ for insertion operation
} \\ \hline

\end{tabular}
\label{table:notations}
\end{table}

Let $F$ be a rooted spanning forest of $G$ on $n$ vertices.
The resultant graph obtained after inserting or deleting a batch $B$ of edges in $G$ is denoted by $G'$.
We use $M$ to denote a set of potential edges that help to establish the connectivity across the trees in a spanning forest of $G'$ and such edges are referred as \textbf{key edges}.
In case of delete operation, $M= E(G) - E(F) - B$, where as $M = B$ for insert operation. We use $m$ to denote the number of key edges.

For delete operation, $\hat{F}$ denote the forest obtained after deleting edges of $B$ from $F$, whereas $\hat{F}$ remains same as $F$ for insert operation.
Each tree $T_i$ in $\hat{F}$ is uniquely identified by its root vertex $r_i$. For each vertex $v$ in $\hat{F}$, the root vertex of the tree containing $v$ is called as \emph{representative}, which is denoted by $rep[v]$.
An edge $(u, v) \in M$ is a \emph{cross edge} if $rep[u] \neq rep[v]$. Let $E_{cross}$ be the set of all cross edges in $G'$ with respect to $\hat{F}$. 
An edge set $E_r \subseteq E_{cross}$ is referred to as \emph{replacement edges} if $|E_r|=k-\ell$, and at most one edge from $E_r$ appears between any two trees in $|\hat{F}|$, where $k$ and $\ell$ denotes the number of components in $\hat{F}$ and $G'$, respectively.}

\begin{figure*}[htbp]
\centering
\includegraphics[width=0.9\linewidth]{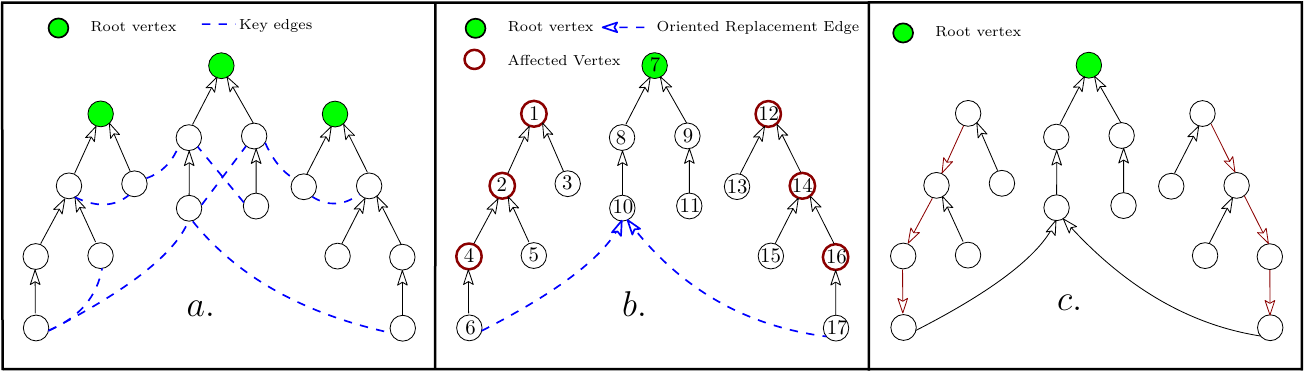}
\caption{Overview of the \textsc{d-rsf} algorithm. 
(a) The black solid lines represent the rooted spanning forest, while the blue edges denote the key edges. 
(b) Replacement edges are selected from the key edges and oriented appropriately. 
(c) Parents are updated for affected vertices during path reversal to obtain an updated rooted spanning forest.}
\label{overview_dig}
\end{figure*}

\subsection{Overview}
We now present a high-level overview of our generic parallel algorithm to solve the \textsc{d-rsf} problem. The dynamic algorithm updates the rooted spanning forest of a graph in parallel, based on a batch of edges to be inserted or deleted, without reconstructing it from scratch.

First, depending on the type of update operation, we either insert the given batch of edges into the underlying graph or remove them from both the forest and the underlying graph. 
Next, we identify \emph{replacement edges} from the available set of \emph{key edges} to reestablish connectivity across the trees in the forest. Once the replacement edges are identified, assigning their orientations arbitrarily can lead to conflicts where a vertex may end up with multiple parents. For example, in Figure~\ref{overview_dig}(b), both vertices $6$ and $17$ could attempt to become the parent of vertex $10$ if orientations are assigned arbitrarily. Such conflicts can result in the loss of valid orientations for certain edges. Therefore, it is necessary to systematically determine the \emph{orientations} of the replacement edges before including them in the forest. The task of finding \emph{oriented replacement edges} constitutes our first subproblem.

Although it is possible to assign orientations to replacement edges without causing internal conflicts, these new orientations may still overwrite the parent relationships of existing tree edges. For instance, in Figure~\ref{overview_dig}(b), the addition of a replacement edge $(6, 10)$ could reassign the parent of vertex $6$ to vertex $10$, thereby causing the loss of the original tree edge $(4, 6)$.

To overcome this, we introduce a second subproblem, which involves reversing the directions of a few tree edges to maintain a valid rooted forest. More precisely, we reverse the orientations along certain paths (e.g., the path from vertex $6$ to $1$ in Figure~\ref{overview_dig} (b)), and then later insert the replacement edges to obtain the new forest.
Overall, the \textsc{d-rsf} problem is divided into two sub-problems. The first sub-problem involves finding the \textbf{orientated replacement edges} across the disconnected trees, while the second aims to reverse the orientations of multiple paths in a forest, shortly called \textbf{reverse paths}.

\subsection{Subroutines}
To address the two sub-problems proposed earlier, we now present the necessary subroutines. We defer the algorithm details required for these subroutines to later subsections. 

\noindent
\textcolor{black}{\textbf{\textsc{FindOrientedReplEdges}}($rep[\cdot]$, $M$)}: \label{FindReplEdges}
\textcolor{black}{Given the representative array $rep[\cdot]$ for a rooted forest $\hat{F}$ and a set $M$ of key edges, this subroutine identifies oriented replacement edges from $M$ to establish connectivity across the trees in $\hat{F}$.}

\noindent 
\textcolor{black}{\textbf{\textsc{ReversePaths}}($R$, $parent[\cdot]$))}:
\textcolor{black}{Given a rooted forest $\hat{F}$ (represented using the $parent[\cdot]$) and a set $R$ of pairs, where each pair denotes the end vertices of a path in $\hat{F}$. This subroutine reverses the parent-child relationships of all vertices along the paths in $R$ in parallel. }

\begin{algorithm}[htbp]
    \DontPrintSemicolon 
    \caption{A GPU Algorithm for Dynamic Rooted Spanning Forest} \label{alg:main_algo}
    \SetKwInput{KwInput}{Input}
    \SetKwInput{KwOutput}{Output}
    \KwInput{A Graph $G = (V,E)$, a rooted spanning forest $F$ of $G$ represented by $parent[\cdot]$, a batch of edges $B$, and an operation $\mathsf{op} \in \{ \mathsf{insert}, \mathsf{delete} \}$.}
    
    \KwOutput{A Rooted Spanning Forest $F'$ of $G' = G\ \mathsf{op}\ B$}
    
    \SetKwBlock{DoParallel}{do in parallel}{end}
 
    $M = B$\; \label{line:insert_m}
 
    \If{operation is delete \label{line:update_ds}} {
        $M = (E(G) - E(F)) - B$\; \label{line:compute_m}
        \For{each edge $(u, v) \in B$ \textbf{in parallel}}{
            \tcc{Let $v$ be the $parent$ of $u$}
            $parent[u] \gets u$ \; \label{line:parent_update}
        }
    }

    $rep[\cdot] \gets$ \textbf{parallel pointer jumping} on $parent[\cdot]$ \; \label{line:pointerJumping}

    \textcolor{black}{$E' \gets$ \textsc{FindOrientedReplEdges}($M$, $rep[\cdot]$)} \label{line:ReplacementEdges}\;
    \For{each oriented edge $(u, v) \in E'$ \textbf{in parallel}}{
        \tcc{Let $v$ be the $parent$ of $u$} 
        $R = R \cup \{(u, rep[u])\}$\; \label{line:GenerateInterval}
    }
    $F' \gets$ \textsc{ReversePaths}($\hat{F}$, $R$, $parent$)\; \label{line:PathReversal}
    Update $parent[\cdot]$ in parallel for each edge in $E'$\; \label{line:IncludeNewParents}
    Insert/Delete $B$ of edges to/from $G$\;
\end{algorithm}

\subsection{Generic Algorithm}

An instance of the \textsc{d-rsf} problem involves an undirected graph $G$, a rooted spanning forest $F$ of $G$, and a batch $B$ of edges for insertion or deletion.  Algorithm~\ref{alg:main_algo} begins by identifying the operation type. For deletions, edges in $B$ are removed from the non-tree edges, and the key edges $M$ are computed in parallel (Line~\ref{line:compute_m}).
Subsequently, for each tree edge in $B$, the forest $F$ is updated in parallel by setting the parent node to itself (Line \ref{line:parent_update}), effectively deleting all the tree edges in $B$ from the forest.
For insert operation, since no non-tree edge exists between disconnected trees in the forest, $B$ itself becomes key edges in Line~\ref{line:insert_m}.
After updating the forest and obtaining key edges, the algorithm proceeds to determine the representative of each vertex, resulting in the $rep[\cdot]$, by applying pointer jumping on the parent array (Line~\ref{line:pointerJumping}). 
This ensures that each vertex in the forest can retrieve its root vertex quickly in the subsequent tasks. 
The algorithm then identifies oriented replacement edges $E'$ from the set $M$ of key edges using the parallel subroutine \textsc{FindOrientedReplEdges} (Line~\ref{line:ReplacementEdges}). 
\textcolor{black}{For instance, As illustrated in Figure~\ref{overview_dig}, it identifies two replacement edges: $(6,10)$ and $(10,17)$. However, when including an edge $(6,10)$ and setting its orientation such that $10$ is a parent of $6$, it is necessary to reverse the orientations of the edges involved in the path from $6$ to $1$. }Therefore, before incorporating the oriented replacement edges into the forest, we first identify the end vertices of the paths whose orientations must be reversed (Line~\ref{line:GenerateInterval}). The paths represented by pairs in $R$ are then reversed using the parallel subroutine \textsc{ReversePaths} (Line~\ref{line:PathReversal}). Finally, the new replacement edges are incorporated into the updated forest $F'$ (Line~\ref{line:IncludeNewParents}) using $|E'|$ threads, ensuring that the spanning forest remains properly rooted after edge updates.

We now present the proof of correctness for the generic algorithm described in Algorithm~\ref{alg:main_algo}).

\begin{theorem}
Given an undirected graph $G$, a rooted spanning forest $F$ of $G$, a batch of edges $B$ to insert or delete, Algorithm~\ref{alg:main_algo} produces a valid rooted spanning forest $F'$ of $G \pm B$.
\end{theorem}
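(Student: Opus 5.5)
We will prove the theorem by checking three properties of the final $parent[\cdot]$ array: (i) it encodes an acyclic structure in which every tree has exactly one root (a vertex with $parent[v]=v$); (ii) every parent pointer corresponds to an edge of $G'=G\pm B$; and (iii) two vertices lie in the same tree of $F'$ exactly when they are connected in $G'$. The argument assumes the specifications of the two subroutines as stated earlier. \textsc{FindOrientedReplEdges} returns a set $E'$ of replacement edges, that is $|E'|=k-\ell$ cross edges with at most one between any pair of trees of $\hat{F}$, oriented so that each tree of $\hat{F}$ contributes at most one child endpoint $u$. \textsc{ReversePaths} reverses exactly the tree paths given in $R$ and leaves every other pointer unchanged.

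\textbf{Step 1: $\hat{F}$ is a valid rooted forest of $G'$.} For insertion, $\hat{F}=F$, and this holds trivially since $E(F)\subseteq E(G)\subseteq E(G')$. For deletion, each tree edge $(u,v)\in B$ with $v=parent[u]$ is removed by setting $parent[u]\gets u$. This cuts the subtree rooted at $u$ into its own tree with root $u$. The remaining pointers are edges of $F-B\subseteq G'$, and no cycle is created, because we only removed pointers. Pointer jumping therefore gives a correct $rep[\cdot]$, so the trees of $\hat{F}$ are exactly the classes of $rep$.

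\textbf{Step 2: key edges suffice for connectivity.} We show that the components of $G'$ are obtained by merging trees of $\hat{F}$ along edges of $M$. For insertion, the trees of $F$ are the components of $G$, and the only new connections come from $B=M$. For deletion, $E(G')=E(\hat{F})\cup M$. Hence the contracted graph $H$ (one node per tree of $\hat{F}$, with the cross edges of $M$ as edges) has exactly $\ell$ components. Because $E'$ consists of $k-\ell$ edges forming at most one edge between any pair of trees, and because the subroutine guarantees $E'$ is a spanning forest of $H$, adding $E'$ to $\hat{F}$ yields an acyclic graph with $k-(k-\ell)=\ell$ components whose vertex sets match those of $G'$. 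This gives property (iii) and the undirected part of (i).

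\textbf{Step 3: the orientation step is well defined (main obstacle).} For each oriented $(u,v)\in E'$ we reverse the path from $u$ to $rep[u]$, so that $u$ becomes the root of its tree, and then set $parent[u]\gets v$. The main obstacle is to show that no vertex receives two parents. There are two ways this could happen, and we rule out each.

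First, two replacement edges could assign a new parent to the same vertex, or two reversed paths could overlap. This is excluded by the orientation guarantee: each tree of $\hat{F}$ is the child side of at most one edge of $E'$. As a consequence, each tree contains at most one path in $R$, and paths in distinct trees are vertex-disjoint.

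Second, the tree that is never a child side must keep its root. Orient $H$'s forest $E'$ from child tree to parent tree. Each tree has out-degree at most one, and there are no cycles by Step 2, so each component of $H$ is an in-tree with exactly one sink tree. The sink tree keeps its original root, and this vertex becomes the unique root of the component in $F'$.

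It remains to handle a single tree $T$ containing one path $u\to rep[u]$. After reversal, every vertex on that path points to its former child on the path, $u$ becomes the root of $T$, and off-path vertices keep their pointers. So $T$ is still a rooted tree, now rooted at $u$. Setting $parent[u]\gets v$ then attaches $T$ below $v$ in another tree, and this is acyclic by the in-tree structure above. Combining these facts establishes (i) and (ii): every pointer is either an edge of $\hat{F}$ (possibly reversed) or an edge of $E'\subseteq M\subseteq E(G')$. Hence $F'$ is a valid rooted spanning forest of $G\pm B$.
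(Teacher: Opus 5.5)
Your proof is correct and follows the same route as the paper's: reverse the path from each child endpoint $u$ to $rep[u]$, attach $u$ below $v$ via its replacement edge, and observe that exactly $k-(k-\ell)=\ell$ roots remain, one per component of $G'$. The paper's proof is essentially only this root count, and your contracted-graph argument supplies what that count silently assumes: $E'$ is a spanning forest of the contracted graph, and each tree of $\hat{F}$ is the child side of at most one edge of $E'$. Together these give disjoint reversed paths and an acyclic $F'$, whereas the paper's bare definition of replacement edges ($k-\ell$ edges, at most one per pair of trees) does not by itself rule out a cycle of trees.
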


\begin{proof}
Let $G'$ be the graph and $\hat{F}$ the forest obtained after the batch updates. Let $\ell$ denote the number of connected components in $G'$ and $k = |\hat{F}|$ represent the number of trees in $\hat{F}$. Using the function \textsc{OrientedReplEdges} (Line~\ref{line:ReplacementEdges}), we compute the set of oriented replacement edges $E'_r = \{e'_1, e'_2, \ldots, e'_{k-\ell}\}$. 

Let $e_k = (x, y)$ be a $replacement~edge$ in $E'_r$. Consider two trees $T_i$ and $T_j \in \hat{F}$ rooted at $r_i$ and $r_j$, respectively, where $x \in T_i$ and $y \in T_j$. The simple path from $x$ to the root $r_i$ is denoted as $P = \{x, v_1, v_2, \ldots, r_i\}$. After applying $\textsc{ReversePaths}$ (Line~\ref{line:PathReversal}), the parent-child relationships along $P$ are reversed, resulting in every vertex $v \in T_i$ having a parent, except for $x$. When the edge $(x, y)$ is inserted, $x$ acquires $y$ as its parent in Line~\ref{line:IncludeNewParents}.

 The addition of each $replacement~edge$ reduces the number of root vertices in $\hat{F}$ by one. Starting with $k$ roots, the insertion of $k-\ell$ replacement edges decreases the root count to $k - (k-\ell) = \ell$. Therefore, the resulting forest $\hat{F}$ is a valid rooted spanning forest of $G'$ with exactly $\ell$ trees, one per connected component of $G'$.
\end{proof}

\section{Replacement Edges and Path Reversal}
The two main subproblems in our generic algorithm to solve \textsc{d-rsf} are finding oriented replacement edges and reversing multiple independent paths, both in parallel. We propose Supergraph (\textsc{sg}) and Hooking Shortcutting (\textsc{hs}) techniques  to address the first problem. Further, we describe Broadcasting (\textsc{bc}) \cite{pr_rst} and design Euler Tour (\textsc{et}) based technique to solve the second. 
\textcolor{black}{Since there are two distinct approaches to solving two subproblems mentioned, we propose the following four algorithms to solve \textsc{d-rsf}.}

\begin{enumerate}
 \item \textsc{sg-et} (Supergraph with Euler Tour)
 \item \textsc{sg-bc} (Supergraph with Broadcasting)
 \item \textsc{hs-et} (Hooking Shortcutting with Euler Tour)
 \item \textsc{hs-bc} (Hooking Shortcutting with Broadcasting)
\end{enumerate}

\subsection{Supergraph for Oriented Replacement Edges}
\label{section_SGorientedReplacementEdges}
Given a forest $\hat{F}$, a representative array $rep[\cdot]$ to hold representative for every vertex in  $\hat{F}$ and a set $M$ of key edges, we now identify oriented replacement edges. 
In this supergraph (\textsc{sg}) approach we construct a supergraph $\tilde{G}$ and a hash table using the forest and key edges. The supergraph helps to maintain the topological structure across the trees in the forest, whereas the hash table is useful to retrieve a key edge corresponding to any edge in the supergraph. 
We first construct a supergraph $\tilde{G}$, in which each vertex corresponds to a tree in the forest $\hat{F}$ and an edge in the supergraph corresponds to a cross edge across the trees in the forest. 
In particular, for each root vertex $u$ in $\hat{F}$ there is a  vertex in $\tilde{G}$. We go through all key edges $e=(u,v)$ in parallel and the edge $(u,v)$ is marked as \emph{cross edge} if $rep[u]\neq rep[v]$. We now use a temporary array $sedges$ to capture the potential edges to be added in the supergraph.
For each cross edge $e_i=(u,v)$ in parallel, we include an edge in $\tilde{G}$ by assigning $sedges[i] = (rep[u], rep[v])$, and insert a key-value pair in a hash table $H$, where $key=(rep[u],rep[v])$, and $value=(u,v)$. At this stage, the number of edges in the supergraph equals to the number of cross edges.  Multiple cross edges over the same two trees result in duplicate edges in the supergraph. To eliminate such redundant edges, we apply parallel sorting and parallel compaction on $sedges$.
Although multiple  pairs with same key attempt to write in the hash table in parallel, eventually no two pairs in the hash table have same key, due to race conditions. In other words, for each  edge in the supergraph $\tilde{G}$, we maintain a unique cross edge in the hash table.

We now apply \textsc{pr-rst} algorithm \cite{pr_rst} on the super graph $\tilde{G}$ to obtain a rooted spanning forest $\tilde{F}$ of $\tilde{G}$. 
By treating the edges of $\tilde{F}$ as keys, we can retrieve the corresponding cross edges from the hash table and these cross edges become oriented replacement edges.
For each edge oriented $(u',v')$ in $\tilde{F}$, where $u'$ is the parent of $v'$, we retrieve a oriented cross edge $(u,v)$ from the hash table $H$, where $u$ is par[$v$]. 

\begin{lemma}
\label{lemma: SG_lemma}
    Given a forest $\hat{F}$ with $k$ trees and $m$ key edges, oriented replacement edges can be obtained using the supergraph approach in $O(\log m + \log^2 k)$ depth and $O(m + k \log k)$ work in the worst case.
\end{lemma}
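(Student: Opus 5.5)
The proof will follow the supergraph construction step by step, bounding the depth and work of each phase in the PRAM model, and then combining the bounds.

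\emph{Phase 1: marking cross edges and building $sedges$ and $H$.} Assign one processor to each key edge $e_i=(u,v)\in M$. Since $rep[\cdot]$ is already computed (Line~\ref{line:pointerJumping} of Algorithm~\ref{alg:main_algo}), the test $rep[u]\neq rep[v]$, the write $sedges[i]=(rep[u],rep[v])$, and the hash-table insertion are each $O(1)$ operations per edge. This phase costs $O(1)$ depth and $O(m)$ work. For the hash table I assume the standard arbitrary-CRCW semantics: concurrent writes to the same key leave exactly one surviving value, and a parallel hash table supports a batch of $m$ insertions in $O(1)$ expected depth and $O(m)$ work.

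\emph{Phase 2: deduplication.} Parallel sorting of at most $m$ pairs, followed by compaction via prefix sums, takes $O(\log m)$ depth. The work is $O(m\log m)$ for comparison sorting. Because the keys are pairs of root identifiers, integer or radix sorting gives $O(m)$ work. To meet the stated $O(m)$ term I will invoke an integer sort. If that is unavailable, I will instead drop explicit sorting and deduplicate through the hash table itself: each edge writes its key, the winners are compacted with a prefix sum, and this costs $O(\log m)$ depth and $O(m)$ work. After this phase the supergraph $\tilde{G}$ has $k$ vertices and $\tilde m\le \min(m,\binom{k}{2})$ edges.

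\emph{Phase 3: running \textsc{pr-rst} on $\tilde{G}$.} This is the main obstacle. Running \textsc{pr-rst} \cite{pr_rst} requires hooking and shortcutting over $O(\log k)$ rounds, with an $O(\log k)$ pointer-jumping or broadcast path reversal inside each round. That yields $O(\log^2 k)$ depth, and each round does $O(k)$ work on vertices, giving the $O(k\log k)$ term. The difficulty is accounting for the edge work: each round rescans the $\tilde m$ supergraph edges, which naively costs $O(\tilde m\log k)$. I would absorb this by arguing that edges discarded as intra-tree edges are compacted away, so the edge work telescopes to $O(\tilde m)$ plus $O(k\log k)$. Failing that, I would accept the bound as a statement about the cost of \textsc{pr-rst} as cited from \cite{pr_rst}. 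The retrieval step then does one hash lookup per edge of $\tilde{F}$ (at most $k-1$ lookups), costing $O(1)$ depth and $O(k)$ work and returning the oriented cross edges.

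\emph{Correctness.} Three facts give correctness. Each edge of $\tilde{F}$ maps to a unique cross edge, so at most one replacement edge joins any pair of trees of $\hat{F}$. $\tilde{F}$ has exactly $k-\ell$ edges, because the components of $\tilde{G}$ coincide with the components of $G'$. Finally, the orientations of $\tilde{F}$ give each non-root super-vertex a single parent. Summing the phases gives $O(1)+O(\log m)+O(\log^2 k)+O(1)=O(\log m+\log^2 k)$ depth and $O(m+k\log k)$ work.
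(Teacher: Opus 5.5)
The paper states this lemma without proof. The accounting it implicitly relies on (sort and compact $sedges$, run \textsc{pr-rst} on $\tilde G$, look up $H$) is exactly your decomposition, and your depth bound and correctness argument are fine. The gap is the work bound in Phase~3, which is where $O(m+k\log k)$ has to come from. Neither of your ways of closing it works.

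\textbf{Vertex work.} Your accounting contradicts your own description of Phase~3. If each of the $O(\log k)$ rounds contains an $O(\log k)$-step pointer-jumping or broadcast reversal, that step alone costs $\Theta(k\log k)$ work per round. Lemma~\ref{lemma: PR_RST_lemma} charges $O(n\log n)$ for a single broadcast pass, i.e.\ $O(k\log k)$ on $\tilde G$. The vertex work is therefore $O(k\log^2 k)$, not $O(k\log k)$.

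\textbf{Edge work.} The telescoping claim is unjustified. Filtering removes an edge only after both endpoints land in the same component. Hooking guarantees geometric decay of the number of \emph{components}, not of the live cross edges between them. In a dense, expander-like $\tilde G$, any partition into pieces of at most $k/2$ super-vertices still cuts a constant fraction of the $\tilde m$ edges. So a constant fraction of the edges survives, and is rescanned, in every round until some component covers half of $\tilde G$, and no argument bounds the number of such rounds by a constant. What you can actually justify is the standard hooking bound $O((\tilde m+k)\log k)$, the same form the paper uses for \textsc{hs} in Lemma~\ref{lemma: HS_lemma}. Since $\tilde m$ can be as large as $\min(m,\binom{k}{2})$, this does not give $O(m+k\log k)$; take $\tilde m=m=k^{3/2}$.

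\textbf{Fallback citation.} Citing \cite{pr_rst} does not rescue the bound. The only cost the paper imports from that work is the single path-reversal pass of Lemma~\ref{lemma: PR_RST_lemma}, which covers neither the hooking rounds nor the edge scans.

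As written, your argument establishes $O(\log m+\log^2 k)$ depth but only $O(m+\tilde m\log k+k\log^2 k)$ work. To reach a bound of the stated shape you need a genuinely work-efficient spanning-forest step on $\tilde G$, not edge filtering. One option is a randomized sampling-based connectivity algorithm with $O(\tilde m+k)$ expected work, followed by Euler-tour rooting. That bound holds in expectation rather than in the worst case, as does your hashing in Phases~1--2; this is a second, minor mismatch with the lemma's ``worst case''.

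Two smaller points:
\begin{itemize}
\item Normalize the hash keys as $(\min,\max)$ of the two representatives. Otherwise deduplication misses $(a,b)$ versus $(b,a)$, and the lookup for an oriented edge of $\tilde F$ can fail.
\item Charging $O(k)$ vertex work per round presumes the super-vertices are renumbered $0,\dots,k-1$. That renumbering must be built from the cross-edge list (isolated super-vertices can be ignored), not by scanning $rep[\cdot]$, which costs $O(n)$.
\end{itemize}
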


The merit of this approach is that the depth of this algorithm is independent on the number of vertices $n$, and only depends on the number of disconnected components and the key edges. However, $m$ can be large in the case of deletion operation, and thus we aim to design a parallel algorithm that minimises the dependency on $m$ in the next subsection.

\subsection{Hooking and Shortcutting for Oriented Replacement Edges}

\begin{algorithm}[htbp]
    \DontPrintSemicolon 
    \caption{Oriented Replacement Edges using Hooking and Shortcutting} \label{alg:hs_algo}
    \SetKwInput{KwInput}{Input}
    \SetKwInput{KwOutput}{Output}
    \KwInput{key edges $M[\cdot]$, and representative array $rep[\cdot]$}
    \KwOutput{At most $k - 1$ oriented replacement edges}
    
    \SetKwBlock{DoParallel}{do in parallel}{end}
    \SetKwRepeat{Do}{do}{while}

    $replEdges \gets $ \textsc{Hooking-Shortcutting}$(M, rep)$\; \label{line:rep_edges_comp}

    \For{each $e_i = (u, v) \in replEdges$ \textbf{in parallel}}{ \label{line:hashEdges}
        $sfEdges[i] = (rep[u], rep[v])$\;
        $H$.insert($\langle rep[u], rep[v] \rangle$, $\langle u, v \rangle$)\;
    }
    $\mathcal{O} \gets \textsc{Eulerian-Tour}(sfEdges)$\; \label{line:euler_forest}
    $R \gets$ $H$.retrieve($\mathcal{O}$)\; \label{line:retreive_sf_edges}
\end{algorithm}

Given the set of key edges $M[\cdot]$ and the representative array $rep[\cdot]$ for a rooted forest $\hat{F}$, our method proceeds in two phases. 
In \textbf{Phase 1 (Line~\ref{line:rep_edges_comp})}, we apply the \textsc{Hooking-Shortcutting} (\textsc{hs}) procedure to identify at most $k-1$ replacement edges that can reconnect the disconnected components. 

\textcolor{black}{In the \textsc{hs} algorithm presented in \cite{hooking_shortcutting}, each vertex begins as a singleton tree, whereas in our adaptation we begin with an existing forest of partial components. We then apply the same \textsc{hs} steps to efficiently identify replacement edges.}

In \textbf{Phase 2 (Line~\ref{line:hashEdges} - Line~\ref{line:retreive_sf_edges})}, we orient the selected replacement edges.
To maintain the topological structure across the trees, we first construct an auxiliary forest by storing, for each replacement edge $(u,v)$, a key-value mapping in a hash table $H$, where the key is $\langle rep[u], rep[v] \rangle$ and the value is $\langle u, v \rangle$ (Line~\ref{line:hashEdges}). This auxiliary forest, formed by the keys in $H$, represents a super-forest over the component representatives.
We then apply the Eulerian Tour algorithm to this super-forest (Line~\ref{line:euler_forest}) to obtain a rooted structure. Finally, we retrieve the corresponding original edges from the hash table $H$ (Line~\ref{line:retreive_sf_edges}) to generate the final set of \emph{oriented replacement edges}. For hash table implementation, we adapted an existing open-source code available online \footnote{\url{https://nosferalatu.com/SimpleGPUHashTable.html}}, and modified it to suit our use case.

\begin{lemma}
\label{lemma: HS_lemma}
Algorithm 2 identifies oriented replacement edges in a graph using Hooking and Shortcutting operations in $O(\log^2 n)$ depth and $O((m + n) \log n)$ work.
\end{lemma}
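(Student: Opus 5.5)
We split the cost of Algorithm~\ref{alg:hs_algo} into its two phases and bound each separately. Correctness comes from the invariants of \textsc{hs}; depth and work come from counting rounds and the cost per round.

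\textbf{Phase~1 (Line~\ref{line:rep_edges_comp}).} We run \textsc{Hooking-Shortcutting} on the vertex set $V$. The initial parent pointers are $rep[\cdot]$, so each tree of $\hat{F}$ starts as a star rooted at its representative. The edge list is $M$. We will argue two things.

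\emph{Correctness.} A key edge $(u,v)$ is recorded as a replacement edge exactly when it successfully hooks one current root under another. Hooking is done only between distinct roots, and the tie-breaking rule of \cite{hooking_shortcutting} (e.g.\ hooking the larger label under the smaller) prevents cycles. So the recorded edges form a forest on the $k$ trees of $\hat{F}$. Each successful hook reduces the number of components by one. \textsc{hs} terminates only when every edge of $M$ is intra-component. At that point the components coincide with those of $G'$, since $\hat{F}\cup M$ spans exactly the connectivity of $G'$. Hence exactly $k-\ell\le k-1$ edges are recorded, at most one between any two trees, which matches the definition of $E_r$.

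\emph{Cost.} We invoke the standard analysis of \cite{hooking_shortcutting}. The number of hook/shortcut rounds is $O(\log n)$. Each round does $O(m+n)$ work: one pass over the surviving edges and one pointer-doubling pass over the vertices. Its depth is $O(\log n)$ if we charge the compaction of unprocessed edges and the pointer jumping to flatten trees, or $O(1)$ per jump in a CRCW model. This gives $O(\log^2 n)$ depth and $O((m+n)\log n)$ work.

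\textbf{Phase~2 (Lines~\ref{line:hashEdges}--\ref{line:retreive_sf_edges}).} We handle the at most $k-1\le n$ replacement edges with one thread each. Hash insertions into $H$ take $O(1)$ expected depth and $O(k)$ work. The super-forest on representatives is acyclic by Phase~1, so the Euler Tour technique \cite{emc} applies. List ranking costs $O(\log k)$ depth and $O(k\log k)$ work (or $O(k)$ work). Rooting each super-tree gives every super-edge an orientation in which each non-root super-vertex has a unique parent. Retrieving from $H$ then gives oriented original edges with the required property: each tree of $\hat{F}$ receives at most one incoming parent edge, and exactly one root per component of $G'$ remains. All of this is dominated by Phase~1's bounds, since $k\le n$.

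\textbf{Main obstacle.} The delicate step is the round bound in Phase~1. Two features differ from \cite{hooking_shortcutting}: we start from a partial forest rather than singletons, and edges that lose races in a hook are retried. We must argue that the number of live components (or the sum of tree heights) still shrinks geometrically. Our first attempt is to reuse the potential argument of \cite{hooking_shortcutting} applied to the contracted graph whose vertices are the $k$ stars. Starting from stars only helps, since every tree has height $1$ initially. This yields $O(\log k)\subseteq O(\log n)$ rounds. If that argument does not transfer cleanly, we fall back on the cruder bound: $O(\log n)$ rounds, each of $O(\log n)$ depth, which is exactly the stated $O(\log^2 n)$.
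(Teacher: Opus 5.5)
The paper states this lemma without any proof, so your sketch has to stand on its own. The Phase~1 correctness argument and the Phase~2 accounting are fine. However, every bound you claim rests on Phase~1 taking $O(\log n)$ rounds, and that is asserted, not shown.

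\textbf{The round bound.} Your fallback is circular: ``$O(\log n)$ rounds, each of $O(\log n)$ depth'' presupposes exactly the round bound that was in question. The bound is also not a generic property of the rule you describe (larger root hooks under smaller, arbitrary winner among concurrent writes, losers retried). Take a hub tree with root label $k$, joined by key edges to trees with roots $1,\dots,k-1$. In round~1 only the hub can hook, say under $k-1$. In each later round the merged component, labelled by its current root, hooks under one arbitrary smaller leaf tree, say the largest remaining one. Nothing else moves, because every leaf tree's only neighbour has a larger label. That is $\Theta(k)$ rounds.

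So you must fix a hooking rule with a provable logarithmic round bound, and prove or cite the bound for that rule. Two options are Shiloach--Vishkin with unconditional hooking of stagnant stars, or hooking every root under its \emph{minimum} neighbouring root. For the min rule the argument is short. A local minimum that receives no hook in round $t$ has, in round $t+1$, a neighbouring component with a smaller label, since each of its neighbours hooked strictly below it. So it hooks in round $t+1$, and the number of non-isolated components at least halves every two rounds.

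\textbf{The per-round cost.} Your per-round accounting must also be made consistent with the work bound. You charge a round $O(m+n)$ work for a \emph{single} pointer-doubling pass, but attribute its $O(\log n)$ depth partly to pointer jumping that flattens the trees. Flattening all $n$ vertices can cost $\Theta(n\log n)$ work per round, i.e.\ $O(m\log n+n\log^2 n)$ overall, which exceeds the stated $O((m+n)\log n)$. There are two ways to fix this.
\begin{enumerate}
\item Perform $O(1)$ shortcut steps per round and charge the $O(\log n)$ depth only to compaction of the surviving edges. This is the Shiloach--Vishkin regime, where the stagnant-star argument is precisely what supplies the round bound.
\item Run Phase~1 on the $k$ representatives alone, with edges $(rep[u],rep[v])$, and flatten only the $k_t$ still non-isolated roots each round. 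With the min rule $k_{t+2}\le k_t/2$, so flattening costs $\sum_t k_t\log k_t=O(k\log k)$ in total. Edge relabelling and compaction cost $O(m)$ per round. This gives $O(m\log k+k\log k)$ work and $O(\log k\cdot\log n)$ depth.
\end{enumerate}
The second version also makes your ``reduce to the contracted graph'' remark literally true. If \textsc{hs} runs on all of $V$, the leaves of each star sit one level below its root, so star status in the real forest can lag the contracted run by one level. The two executions then need not coincide step for step, and ``starting from stars only helps'' is not automatic.
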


A key merit of this approach is that its depth depends solely on the number of vertices, not the number of edges. While the \textsc{hs} technique is primarily used to identify oriented replacement edges, it serves as a parallel static algorithm for obtaining a rooted spanning tree.

\subsection{Broadcasting Based Path Reversal} \label{BC_path}

The purpose of this subroutine is to reverse the orientation of all disjoint paths of a forest, which are specified by $R$, where each path is represented by a pair $(u_i, r_i)$. We use the broadcasting method by Cong et al. \cite{pr_rst} to efficiently reverse multiple paths in a forest in parallel. The algorithm begins by identifying all the \emph{on-path} vertices that lie on the path from $u_i$ to $r_i$ for each pair in $R$ simultaneously. Once these vertices are identified, the algorithm reverses their parent-child relationships in parallel i.e., if $u = parent[v]$, we now set $parent[u] = v$.

For each vertex $u$ in the tree, an array of size $O(\log n)$, referred to as the \emph{special ancestor}, is maintained. This array stores all ancestors of $u$ at distances equal to powers of 2, using the pointer jumping process. In the subsequent $\log n$ iterations, all vertices that lie on the path from $u_i$ to $r_i$ can be identified in parallel using special ancestors. For further details, we refer to \cite{pr_rst}.

\begin{lemma} \label{lemma: PR_RST_lemma}
\textnormal{(Lemma~1 and 2 from \cite{pr_rst})} Given a spanning forest $\hat{F}$ and a set $R$ of pairs, where each pair represents the end vertices of a disjoint path in $\hat{F}$, the paths represented by $R$ can be reversed using the broadcasting technique in $O(\log n)$ depth and $O(n \log n)$ work.
\end{lemma}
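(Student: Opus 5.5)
The statement is attributed to Lemmas~1 and~2 of \cite{pr_rst}, so I would reconstruct those arguments in our setting and check that nothing depends on $\hat{F}$ being a single tree or on $R$ containing a single pair. The argument has two parts: correctness of the on-path identification, and a cost analysis of the \emph{special ancestor} table together with the $\log n$ broadcasting rounds.

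\textbf{Step 1 (special ancestors).} For every vertex $u$ I set $anc_0[u]=parent[u]$. I then run $\lceil\log n\rceil$ rounds of pointer jumping, $anc_{j+1}[u]=anc_j[anc_j[u]]$, with roots pointing to themselves. Each round uses one thread per vertex and constant time, so this phase costs $O(\log n)$ depth and $O(n\log n)$ work. Its output is the $O(\log n)$-size array per vertex described in the text, so the space is $O(n\log n)$ as well.

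\textbf{Step 2 (marking on-path vertices).} For each pair $(u_i,r_i)\in R$ I mark $u_i$ with the label $i$, since $r_i$ is an ancestor of $u_i$. I then process levels $j=\lceil\log n\rceil,\dots,0$ in decreasing order, so that marks from a vertex flow toward the lower ancestor levels. In round $j$, every vertex $w$ currently marked with label $i$ also marks $anc_j[w]$ with $i$, provided that ancestor exists and lies within the path. The termination condition is that the ancestor has not passed $r_i$; I would enforce it using depths obtained in Step~1 by jumping with distance counters.

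The invariant I would prove by induction is the following: after processing levels $\ge j$, the set of marked vertices is exactly the set of path vertices at distances from $u_i$ that are multiples of $2^{j}$. At $j=0$ this gives every vertex of the path. Disjointness of the paths ensures that each vertex receives at most one label, so the concurrent writes do not conflict, and the marks of distinct paths never interfere. Each round is $O(1)$ depth with $n$ threads, which gives $O(\log n)$ depth and $O(n\log n)$ work.

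\textbf{Step 3 (reversal).} Every marked $v\neq u_i$ with child-on-path $c$ sets $parent[v]\gets c$, which equivalently means each marked $c\neq r_i$ writes $newparent[parent[c]]=c$. The vertex $u_i$ becomes the root of its tree. This step takes $O(1)$ depth and $O(n)$ work. A vertex that is not on any path keeps its parent, so the tree stays connected and acyclic, and only the segment between $u_i$ and $r_i$ is re-rooted.

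The main obstacle is Step~2's invariant: making "propagate down the ancestor levels" precise so that no vertex beyond $r_i$ is marked. I would handle this with depth differences, letting $d=depth[u_i]-depth[r_i]$ and reading the binary expansion of the offset. If this becomes too delicate, I would instead cite \cite{pr_rst} directly for this step.
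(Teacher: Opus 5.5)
Your proposal is correct and is essentially the argument the paper defers to Cong et al.~\cite{pr_rst}: build the special-ancestor table with $\lceil\log n\rceil$ rounds of pointer jumping, mark the on-path vertices in $\log n$ broadcasting rounds over decreasing levels, and flip the parent pointers in one parallel step, for $O(\log n)$ depth and $O(n\log n)$ work. Your invariant is the right one: after levels $\ge j$ are processed, exactly the path vertices at distance a multiple of $2^j$ from $u_i$ are marked.

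One remark: in the paper's use $r_i = rep[u_i]$ is always the root, so the depth-based ``within the path'' test you flag as the main obstacle is unnecessary, because with self-looping roots, marking $anc_j[w]$ unconditionally already yields exactly the path. This is also the only case in which your Step~3 claim holds that $u_i$ becomes the root and the tree stays connected.
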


\textcolor{black}{The primary drawback of this technique is its $O(n \log n)$ space complexity. Additionally, it faces significant limitations when implemented on a \textsc{gpu}. The multi-step pointer jumping required to construct the special ancestor array involves multiple iterations and external synchronization, resulting in considerable runtime overhead. To overcome these inefficiencies, we propose a linear-space method in the next section that is also tailored for the \textsc{gpu} architecture.}

\subsection{Eulerian Tour Based Path Reversal}
Given a rooted forest $\hat{F}$ and a set $R$ of pairs representing the end vertices of paths in $\hat{F}$ whose orientation needs to be reversed, we design Algorithm~\ref{alg:reverse_path} to achieve this using \emph{Euler Tour}. 

In $\hat{F}$, let $firstEdge[v]$ and $lastEdge[v]$ denote the first and last edges incident on each vertex $v$. Each tree in $\hat{F}$ has its own Eulerian tour, where each tree edge appears twice.

As shown in Line~\ref{line:EulerianTour} of Algorithm~\ref{alg:reverse_path}, we first compute the Euler tours for all trees in $\hat{F}$ and assign ranks to the edges using the parallel algorithm by Polák et al. \cite{emc}. Based on these ranks, we assign \textit{start} and \textit{finish} times to each vertex, categorizing them into three groups as outlined in Lines~\ref{line:start_assign}--\ref{line:end_assign}. The assigned \textit{start} and \textit{finish} times for all vertices are shown in Figure~\ref{start_end_euler}.

For each path $(x_i, r_i)$, we update $support[r_i]$ with $x_i$ (Line~\ref{line:update_suppport}), enabling constant-time retrieval in subsequent steps. 
\textcolor{black}{In the last for loop of Algorithm~\ref{alg:reverse_path}, we go through all edges and identify the affected edges in Line~\ref{line:interval_condition}, and reverse their orientations.}

Each vertex $u$ knows its representative $\text{rep}[u]$ (the tree to which it belongs). Using this information, $u$ can easily determine the end vertex of the path (remember $support$ was updated earlier (Line~\ref{line:update_suppport})). Thus, $v$ can efficiently verify whether it lies on the path.

\begin{figure}[htbp]
\centering
\includegraphics[width=0.8\linewidth]{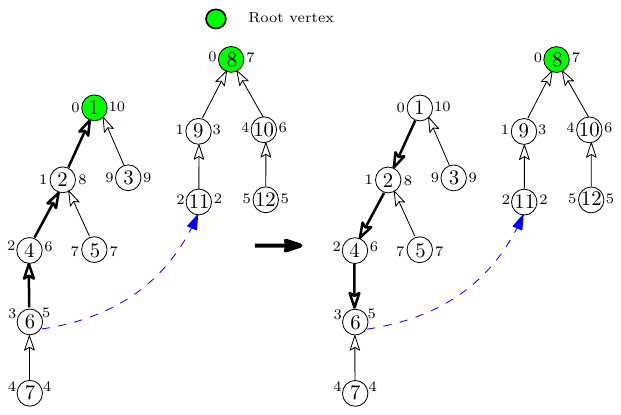}
\caption{Orientation of affected edges, show in solid thick, is reversed. Numbers shown on left and right sides of each vertex correspond to start and finish times, obtained using Euler-Tour.}
\label{start_end_euler}
\end{figure}

\begin{algorithm}[htbp]
    \DontPrintSemicolon
    \caption{\textsc{ReversePathsEulerTour}}
    \label{alg:reverse_path}
    \SetKwInput{KwInput}{Input}
    \SetKwInput{KwOutput}{Output}
    \KwInput{A rooted spanning forest $\hat{F}$, set of vertex pairs $R$, $rep[\cdot]$ and $parent[\cdot]$ arrays}
    \KwOutput{Updated $parent[\cdot]$ in the Forest $\hat{F}$}
    \SetKwBlock{DoParallel}{do in parallel}{end}

    $rank[\cdot] \gets$ \textsc{EulerianTour}($parent$)\;   \label{line:EulerianTour}

    \For{each vertex $v$ \textbf{in parallel}} { \label{line:start_assign}
        \If{$v$ is a \textbf{Leaf Vertex}} {
            $start[v] = finish[v] = rank[\langle v, parent[v] \rangle]$\;
        }
        \ElseIf{$v$ is a \textbf{Root Vertex}} {
            $start[v] = rank[firstEdge[v]]$\;
            $finish[v] = rank[lastEdge[v]] + 1$\;
        }
        \ElseIf{$v$ is an \textbf{Intermediate Vertex}} {
            $start[v] = rank[firstEdge[v]]$\;
            $finish[v] = rank[\langle v, parent[v] \rangle]$\;
        } \label{line:end_assign}
    } 

    \For{each vertex pairs $(x_i, r_i) \in R$ \textbf{in parallel}} { \label{line:firstLine}
        $support[r_i] = x_i$\; \label{line:update_suppport}
    }

\For{each edge $(u, parent[u])$ in $\hat{F}$ \textbf{in parallel}} { \label{line:start_reversal}
    $r = rep[u], x = support[r]$\; \label{line:support}
    \If{($start[r] < start[u] \leq start[x]$) and ($finish[r] > finish[u] \geq finish[x]$)} { \label{line:interval_condition}
        $v = parent[u]$ \;
        $parent[v] = u$ \; \label{line:lastLine}
    }
}
\end{algorithm}

\begin{theorem}
\label{theorem: ET}
Given a spanning forest $\hat{F}$ and a set $R$ of pairs, where each pair represents the end vertices of a disjoint path in $\hat{F}$, Algorithm~\ref{alg:reverse_path} reverses the paths represented by $R$ in $O(\log n)$ depth and $O(n)$ work.
\end{theorem}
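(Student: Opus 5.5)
We split the proof into correctness and complexity, and handle correctness through the standard characterization of ancestry by Euler-tour intervals. After Line~\ref{line:EulerianTour} and Lines~\ref{line:start_assign}--\ref{line:end_assign}, each vertex $v$ carries an interval $[start[v], finish[v]]$. First we will check that $start$ is the rank of the first tour edge leaving $v$ downward, or of the edge into $v$ for a leaf. Then we check that $finish$ is the rank of the edge returning to $parent[v]$; for roots, the $+1$ makes the interval strictly enclose all others. From this we derive the standard nesting property. For $u,w$ in the same tree, $w$ is an ancestor of $u$ (with $u\neq w$) if and only if $start[w] < start[u]$ and $finish[w] > finish[u]$. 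Non-ancestor pairs have disjoint or non-nested intervals.

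Leaves have $start=finish$, so we must also verify the non-strict version: $u$ is an ancestor of $x$ or equal to it if and only if $start[u]\le start[x]$ and $finish[u]\ge finish[x]$. We expect this step to be the main obstacle. The leaf and root special cases, and ties between consecutive ranks, have to be handled so that the strict and non-strict inequalities in Line~\ref{line:interval_condition} are exactly right. I would first fix the tour convention: an edge $\langle p,c\rangle$ into a child precedes the subtree of $c$, and $\langle c,p\rangle$ follows it. Then I would argue by induction on subtree height that every edge rank inside the subtree of $c$ lies strictly between the ranks assigned as $start[c]$ and $finish[c]$.

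With the characterization in hand, fix a pair $(x_i,r_i)\in R$. Since $r_i$ is the root of the tree containing $x_i$, the vertex $x_i$ is the unique one with $support[r_i]=x_i$. Disjointness means at most one pair per tree, so the writes in Line~\ref{line:update_suppport} are conflict-free. For an edge $(u,parent[u])$ with $rep[u]=r$ and $x=support[r]$, the condition in Line~\ref{line:interval_condition} says two things: $u\neq r$ is a proper descendant of $r$, and $u$ is an ancestor of $x$ or equal to it. These are exactly the non-root vertices on the path from $x$ to $r$. For each such $u$, Line~\ref{line:lastLine} sets $parent[parent[u]]=u$. Each vertex $v$ on the path other than $x$ has exactly one on-path child, so each location is written by at most one thread and there are no races.

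The resulting parent array has $x$ as the new root. The old root $r$ becomes a child, and all off-path parent pointers are unchanged. One subtlety remains: the test uses the old $parent[u]$, so reads must precede writes. We ensure this by reading into a local $v$ first, or by using a double buffer. The tree therefore remains a valid rooted tree with the path reversed. Trees with no pair in $R$ are untouched, because their $support$ entry is unset and the condition fails.

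For complexity, the Euler tour construction and list ranking of Polák et al.~\cite{emc} run in $O(\log n)$ depth and $O(n)$ work, using $2(n-1)$ tour edges. The remaining three loops are each a single parallel step over at most $n$ vertices, pairs or edges, with $O(1)$ work per element. Summing gives $O(\log n)$ depth and $O(n)$ work, as claimed in Theorem~\ref{theorem: ET}.
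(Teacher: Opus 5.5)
Your route is the same as the paper's: identify the on-path vertices by the Euler-tour interval test, and use the fact that distinct pairs lie in distinct trees to rule out conflicts. You are more careful than the paper, whose proof only asserts the interval characterization. You also check the role of the $+1$ at the root and the strict/non-strict boundary cases. You observe that each on-path vertex other than $x$ has exactly one on-path child, so no two threads write the same $parent$ entry. And you note that $parent[u]$ must be read before anyone writes it. One descriptive slip: for a leaf the algorithm uses $\langle v, parent[v]\rangle$, which is the edge leaving $v$, not the edge into $v$. Your later use of $start[v]=finish[v]$ matches the algorithm, so only the wording needs fixing.

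\textbf{The false step.} You claim the output ``has $x$ as the new root'' and therefore ``remains a valid rooted tree.'' That is not true of Algorithm~\ref{alg:reverse_path} as written.
The only writes there are $parent[v]\gets u$, where $v$ is the old parent of an on-path $u$. So every written location is a proper ancestor of $x$, and $parent[x]$ is never touched. If $x\neq r$ and $p$ is the old parent of $x$, the output has $parent[p]=x$ and $parent[x]=p$. That is a 2-cycle, and the tree has no root at all.
What the loop actually guarantees is weaker: every path edge gets its reversed pointer, all off-path pointers are unchanged, and $parent[x]$ is stale.
You can fix this in either of two ways:
\begin{itemize}
\item add a step setting $parent[x_i]\gets x_i$ for every pair, run after the reversal loop; this is constant depth and $O(|R|)$ work, so the bounds are unaffected;
\item or state the weaker guarantee and note that Line~\ref{line:IncludeNewParents} of Algorithm~\ref{alg:main_algo} immediately overwrites $parent[x]$ with the new parent across the replacement edge.
\end{itemize}
The paper's proof of Algorithm~\ref{alg:main_algo} glosses over the same point when it says $x$ is the only vertex left without a parent.

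\textbf{A smaller point on ordering.} ``Reading into a local $v$ first'' orders reads before writes only within a synchronous PRAM step. On an asynchronous GPU, the thread of $u$'s on-path child $c$ may write $parent[u]\gets c$ before thread $u$ reads $parent[u]$. Thread $u$ would then restore the old orientation of $(c,u)$. In that setting you need your double buffer, or a global barrier between the read and write phases.
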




\section{Correctness and Complexity Analysis}
In this section, we analyze the depth and work complexity of the four proposed algorithms. We then separately analyze the depth complexity for the insertion and deletion operations.

\begin{theorem}
    The \textsc{sg-bc} and \textsc{sg-et} algorithms \textbf{insert} a batch $B$ of edges in $O(\log n + \log^2 k)$ depth, while the \textsc{hs-bc} and \textsc{hs-et} algorithms require a depth of $O(\log^2 n)$. The total work performed by all algorithms is $O(n \log n)$.
\end{theorem}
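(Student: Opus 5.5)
We follow Algorithm~\ref{alg:main_algo} line by line for $\mathsf{op}=\mathsf{insert}$ and add up the depth and work of each phase. For an insertion, Line~\ref{line:insert_m} sets $M=B$ and $\hat{F}=F$, so the forest and parent array are untouched and the delete branch costs nothing. This gives $m=|B|$. The number $k$ of trees in $\hat{F}$ is at most $n$, and $m$ counts only batch edges. We make the standard assumption that the batch is polynomial in $n$, so $\log m=O(\log n)$. This assumption is needed to state the bound in terms of $n$ and $k$ alone.

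\emph{Common phases.} Line~\ref{line:pointerJumping} computes $rep[\cdot]$ by pointer jumping on $parent[\cdot]$. This takes $O(\log n)$ rounds, each of $O(n)$ work, so the depth is $O(\log n)$ and the work is $O(n\log n)$. Line~\ref{line:GenerateInterval} is a single parallel step over $E'$, with $|E'|\le k-1$. Line~\ref{line:IncludeNewParents} is also a single parallel step. For \textsc{ReversePaths} (Line~\ref{line:PathReversal}), the paths in $R$ are disjoint: each runs from a vertex to the root of its own tree, and each tree of $\hat{F}$ receives at most one such path because the replacement edges are oriented as a rooted super-forest. Hence Lemma~\ref{lemma: PR_RST_lemma} applies to the \textsc{bc} variants, giving $O(\log n)$ depth and $O(n\log n)$ work. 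Theorem~\ref{theorem: ET} applies to the \textsc{et} variants, giving $O(\log n)$ depth and $O(n)$ work.

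\emph{Replacement-edge phase.} For the \textsc{sg} variants, Lemma~\ref{lemma: SG_lemma} gives depth $O(\log m+\log^2 k)=O(\log n+\log^2 k)$ and work $O(m+k\log k)$. For the \textsc{hs} variants, Lemma~\ref{lemma: HS_lemma} gives depth $O(\log^2 n)$ and work $O((m+n)\log n)$. Summing the sequential phases, the \textsc{sg-bc} and \textsc{sg-et} algorithms have depth $O(\log n)+O(\log n+\log^2 k)+O(\log n)=O(\log n+\log^2 k)$. The \textsc{hs-bc} and \textsc{hs-et} algorithms have depth dominated by $O(\log^2 n)$.

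\emph{Work, the main obstacle.} The work bound is the delicate step, because Lemma~\ref{lemma: HS_lemma} contributes $O(m\log n)$ and Lemma~\ref{lemma: SG_lemma} contributes $O(m)$. Both terms must be absorbed into $O(n\log n)$.
\begin{enumerate}
\item The $O(k\log k)$ term is harmless, since $k\le n$.
\item For the batch contribution, we first note that only cross edges matter: edges with $rep[u]=rep[v]$ are discarded in a constant-depth filtering step.
\item We then bound $m$ by $O(n)$. We would argue this from the batch-size regime the paper treats, namely batches that are a small percentage of the graph. Alternatively, we would state the bound as $O((n+|B|)\log n)$ and remark that it equals $O(n\log n)$ whenever $|B|=O(n)$.
\end{enumerate}
We expect this accounting of the $m$-dependent term to be the main point requiring care. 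If a tighter statement is wanted, we would reduce $M$ to at most one edge per pair of trees via the sort-and-compact step before running \textsc{hs}; that would only bound the count by $O(k^2)$, so we would still rely on $|B|=O(n)$.
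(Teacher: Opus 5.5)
Your proposal is correct and takes the same route the paper implicitly relies on: the paper gives no separate proof of this theorem and simply sums the bounds for pointer jumping, the supergraph and hooking--shortcutting lemmas, the broadcasting lemma, and the Euler-tour theorem. Your depth accounting is right, and $\log m = O(\log n)$ needs no extra hypothesis, since $B$ is a set of vertex pairs and so $m = |B| \le n^2$.

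Your caveat on work is well founded, and it points to an omission in the statement rather than in your argument. The final step of writing $B$ into $G$ alone costs $\Omega(|B|)$, and the hooking--shortcutting lemma contributes $O(m\log n)$. So the unconditional $O(n\log n)$ bound is false once $|B| = \omega(n\log n)$. What the lemmas actually give is $O((n+|B|)\log n)$, which is $O(n\log n)$ under your assumption $|B| = O(n)$.
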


\begin{theorem}
    The \textsc{sg-bc} and \textsc{sg-et} algorithms \textbf{delete} a batch $B$ of edge in  $O(\log n + \log^2 k)$ depth, while the \textsc{hs-bc} and \textsc{hs-et} algorithms require a depth of $O(\log^2 n)$. Both algorithms perform total work of $O(E + n \log n)$, where $n$ denotes the number of vertices and $E$ denotes the number of edges in $G$.
\end{theorem}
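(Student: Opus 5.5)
We prove the theorem by tracing Algorithm~\ref{alg:main_algo} line by line in the delete case and adding up depth and work, using Lemma~\ref{lemma: SG_lemma}, Lemma~\ref{lemma: HS_lemma}, Lemma~\ref{lemma: PR_RST_lemma} and Theorem~\ref{theorem: ET} for the two subroutines. Correctness is already given by the generic correctness theorem, so only complexity remains.

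\textbf{Preprocessing (Lines~\ref{line:update_ds}--\ref{line:pointerJumping}).} Computing $M=(E(G)-E(F))-B$ amounts to marking the edges of $B$ and compacting the non-tree edges. With a parallel prefix sum (or hashing $B$) this takes $O(\log E)=O(\log n)$ depth and $O(E)$ work, because $E\le n^2$. Resetting $parent[u]\gets u$ for the tree edges of $B$ takes $O(1)$ depth and $O(|B|)\le O(E)$ work. Pointer jumping to compute $rep[\cdot]$ takes $O(\log n)$ depth and $O(n\log n)$ work.

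\textbf{Replacement edges (Line~\ref{line:ReplacementEdges}).} Here $m=|M|\le E$.
\begin{itemize}
\item For the \textsc{sg} variants, Lemma~\ref{lemma: SG_lemma} gives $O(\log m+\log^2 k)=O(\log n+\log^2 k)$ depth, since $\log m\le 2\log n$, and $O(m+k\log k)\subseteq O(E+n\log n)$ work.
\item For the \textsc{hs} variants, Lemma~\ref{lemma: HS_lemma} gives $O(\log^2 n)$ depth.
\end{itemize}

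\textbf{Path reversal and finish (Lines~\ref{line:GenerateInterval}--\ref{line:IncludeNewParents}).} Building $R$ from $E'$ takes $O(1)$ depth, since $|E'|\le k-1$. The paths $(u,rep[u])$ are disjoint because each tree contributes at most one such path (one oriented edge per child super-vertex), so the reversal lemmas apply. Theorem~\ref{theorem: ET} gives $O(\log n)$ depth and $O(n)$ work; Lemma~\ref{lemma: PR_RST_lemma} gives $O(\log n)$ depth and $O(n\log n)$ work. The final parent updates and the deletion of $B$ from $G$ take $O(1)$ depth and $O(E)$ work.

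\textbf{Summing.} The \textsc{sg} variants have depth $O(\log n+\log^2 k)$ and the \textsc{hs} variants $O(\log^2 n)$. The work of \textsc{sg} is $O(E+n\log n)$.

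\textbf{Main obstacle: \textsc{hs} work.} Lemma~\ref{lemma: HS_lemma} states $O((m+n)\log n)$ work, which exceeds $O(E+n\log n)$ when $m=\Theta(E)$. I would first try to reduce the effective edge set before hooking. Cross edges can be filtered in one pass: $O(m)$ work and $O(\log m)$ depth. Duplicate edges between the same pair of trees can then be removed by sorting and compaction, as in the supergraph construction. This leaves an edge set whose size is bounded in terms of the $k$ super-vertices. However, that set can still have $\Theta(k^2)$ edges, so the reduction is not automatically $O(n)$.

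Failing that, I would sharpen the \textsc{hs} accounting. An edge that becomes intra-tree is discarded permanently, and the number of hooked edges is at most $k-1$. The remaining difficulty is bounding the repeated scans of still-active edges across the $O(\log n)$ rounds. If this cannot be done, the honest bound for \textsc{hs} is $O((E+n)\log n)$ work, and I would note that the stated $O(E+n\log n)$ requires this extra argument.
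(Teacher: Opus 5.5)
Your depth bounds and the $O(E+n\log n)$ work bound for the \textsc{sg} variants follow the route the paper intends. The paper states this theorem without a separate proof, and the only argument its text supports is your composition of four pieces: the $O(\log E)$-depth, $O(E)$-work compaction computing $M$ (the ``$O(\log E)$ overhead'' of deletions mentioned in the experiments); $O(\log n)$-depth pointer jumping; Lemma~\ref{lemma: SG_lemma} or Lemma~\ref{lemma: HS_lemma}; and Lemma~\ref{lemma: PR_RST_lemma} or Theorem~\ref{theorem: ET}. You also note that the paths $(u,rep[u])$ lie in distinct trees of $\hat F$, since each non-root super-vertex has exactly one parent edge. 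This is needed to invoke the reversal results, and the paper leaves it implicit.

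The genuine gap is the one you flagged, and neither repair you sketch closes it. Take $B=E(F)$. Every tree of $\hat F$ is then a single vertex, so filtering cross edges and deduplicating removes nothing. The \textsc{hs} step of Algorithm~\ref{alg:hs_algo} becomes a static spanning-forest computation on $G-B$, which has $\Theta(E)$ edges once $E\ge 2n$. The claimed bound would therefore require hooking and shortcutting alone to be a static connectivity routine with $O(E+n\log n)$ work. Lemma~\ref{lemma: HS_lemma} gives only $O((E+n)\log n)$, and that $\log n$ factor comes from rescanning the still-active edges in every round. Nothing in the description of \textsc{hs} forces the active edge set to shrink geometrically, so sharper bookkeeping alone will not remove it. You would need a new analysis or a different, work-efficient connectivity subroutine.

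The paper supplies neither. Composing its own lemmas, as you did, yields $O(E+n\log n)$ work for \textsc{sg-bc} and \textsc{sg-et}, but only $O((E+n)\log n)$ for \textsc{hs-bc} and \textsc{hs-et}. Your ``honest bound'' is therefore the correct conclusion. State it as the result, or read ``both algorithms'' as the two \textsc{sg} variants, rather than leaving the \textsc{hs} work bound as an unfinished step.
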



The space complexity of the Algorithm ~\ref{alg:main_algo} is bounded by $O(V + E)$, where $V$ represents the number of vertices and $E$ the number of edges in the graph. This complexity arises from storing the graph (using an edge list) and auxiliary structures such as the parent array, representative (rep) array, and similar data structures. Temporary storage for operations such as compaction, sorting or hash table also scales with $V$ or $E$. Additionally, $O(B)$ space is required for handling a batch $B$ of edges.

Further, broadcasting approach requires $O(V\log V)$ space as each vertex stores upto $\log V$ ancestors \cite{pr_rst}. Consequently, the \textsc{sg-bc} and \textsc{hs-bc} algorithms consume $O(V \log V + E)$ space, whereas \textsc{sg-et} and \textsc{hs-et} algorithms consume $O(V + E)$ space.

We now proceed to prove the correctness of Algorithm~\ref{alg:main_algo} and Algorithm~\ref{alg:reverse_path}.

\begin{lemma}
Let $R$ be a set containing $k$ pairs of vertices $(x_1, y_1), \allowbreak (x_2, y_2), \allowbreak \ldots, \allowbreak (x_k, y_k)$ in a forest $F = \{T_1, \allowbreak T_2, \allowbreak \ldots, \allowbreak T_k\}$, where $y_i$ is the root of a tree $T_i \in F$ and $x_i$ is a vertex in $T_i$. Using the $start$ and $end$ times obtained from the Eulerian tour, Algorithm~\ref{alg:reverse_path} correctly reverses all paths from each $x_i$ to its corresponding root $y_i$.
\end{lemma}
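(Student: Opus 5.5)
The argument rests on the standard Euler-tour interval characterization of ancestry. For a vertex $w$ in a tree $T_i$, the tour enters the subtree of $w$ at time $start[w]$ and leaves it at time $finish[w]$. So $a$ is an ancestor of $b$ in $T_i$ exactly when $start[a]\le start[b]$ and $finish[a]\ge finish[b]$, and proper ancestry makes at least the start inequality strict.

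The first step is to check that the three cases in Lines~\ref{line:start_assign}--\ref{line:end_assign} realize this. For a leaf, the single rank $rank[\langle v,parent[v]\rangle]$ gives a degenerate interval. For an intermediate vertex, the interval runs from its first outgoing tour edge to the edge back to its parent. For the root, the interval is widened by one so that it strictly contains every other interval of its tree. The check is by induction on subtree height: the tour edges of $v$'s subtree occupy a contiguous block of ranks after the edge $\langle parent[v],v\rangle$ and before $\langle v,parent[v]\rangle$. Hence the intervals of descendants of $v$ nest inside $[start[v],finish[v]]$, and intervals of unrelated vertices are disjoint.

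Next fix a tree edge $(u,parent[u])$ with $r=rep[u]$ and $x=support[r]$. Since the paths are one per tree, $support[r]$ is written by exactly one pair, so $x=x_i$ and $r=y_i$ for the tree $T_i$ containing $u$. The path from $x$ to $r$ consists exactly of the ancestors of $x$, so its non-root vertices are the $u$ with $u$ an ancestor of $x$ (possibly $u=x$) and $u\ne r$. By the first step, the condition $start[u]\le start[x]$ and $finish[u]\ge finish[x]$ says $u$ is an ancestor of $x$. The strict inequalities against $r$ say $u\ne r$, and they hold for every non-root $u$ because the root's interval was widened. Therefore the test in Line~\ref{line:interval_condition} succeeds exactly for the vertices $u$ on the path other than the root, that is, exactly for the tree edges $(u,parent[u])$ lying on the path. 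For each such edge the algorithm sets $parent[parent[u]]=u$, which is the reversal of that edge.

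The remaining issue is that all edges are processed in parallel, and this is the step I expect to be the main obstacle. Every vertex $v$ on the path other than $x$ has exactly one on-path child $u$, and that child is the only thread writing $parent[v]$, so there are no conflicting writes. The test reads only $start$, $finish$, $rep$ and $support$, which are never modified, so it is unaffected by concurrent writes. The one hazard is that thread $u$ must read $parent[u]$ before another thread overwrites it. Here I would argue that the guard and the read $v=parent[u]$ can be taken from a snapshot, for example a copy of $parent[\cdot]$ or a two-phase read-then-write. Vertices off the path are untouched, and $x$ keeps its old parent pointer, which Algorithm~\ref{alg:main_algo} then overwrites. The result is the tree $T_i$ re-rooted at $x_i$ with the path reversed, for every $i$ simultaneously.
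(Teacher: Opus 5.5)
Your proposal is correct and follows the same approach as the paper: the Euler-tour $start$/$finish$ intervals single out exactly the non-root vertices on the path from $x_i$ to its root, and the reversals cannot conflict because each tree carries only one path. Your version is more complete than the paper's, which only asserts these points: you prove the interval characterization by nesting, explain why the root's finish time needs the $+1$, show that each $parent[v]$ has a unique writer within a tree, and point out that $parent[u]$ must be read from the input forest rather than from the array being overwritten. You also put $x_i$ and the root in the correct positions in the inequalities, matching Algorithm~\ref{alg:reverse_path}, whereas the paper's proof has them swapped.
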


\begin{proof}
    Consider a tree $T_i$ and vertex pair $(x_i,y_i)$. Let $U$ to be the set of vertices $u_i$ such that $first[x_i] < first[u_i] \leq first[y_i]$ and $V$ to be the set of vertices $v_i$ such that $last[x_i] > last[v_i] \geq last[y_i]$. Then $Path(x_i,y_i)$ = $\{ w_i \mid w_i \in U \cap V \}$ consists of all vertices in the simple path from $x_i$ to $y_i$. $\forall$ $i,j \in [1,k]$ such that $i \neq j$, since any two trees $T_i$, $T_j$ are disjoint, it follows that $Path(x_i,y_i) \cap Path(x_j,y_j) = \emptyset$. Thus, all paths can be reversed without conflicts.

\end{proof}
\section{Experiments and Results}
In this section, we introduce the datasets and the configuration of the experimental platform. We then analyse the performance and scalability of the proposed algorithms compared to the static \textsc{bfs} and \textsc{pr-rst} algorithms. Finally, we highlight key insights and the influencing factors of the \textsc{rsf} algorithm.

\subsection{Experimental Setup and Dataset}
We conduct our experiments on an NVIDIA A100 GPU with 80\,GB of on-board memory, based on the Ampere architecture. The GPU features 6912 CUDA cores, 80 Streaming Multiprocessors (SMs), 2.04\,TB/s memory bandwidth, 6\,MB of L2 cache, and 128\,KB of L1 cache per SM. The GPU is connected to an AMD EPYC 7742 CPU with 64 cores, 4\,MB of L1 cache, 32\,MB of L2 cache, and 256\,MB of L3 cache. The number of threads per block was set to 1024 for all experiments. To facilitate reproducibility and future comparisons, we have made the complete source code used for the experiments, along with all implementation details publicly available.\footnote{\url{https://github.com/Abhijeetkumar96/batch-dynamic-spanning-tree.git}}

To test the capability of our algorithms under dynamic graph updates, we used 13 datasets that include both real-world and synthetic graphs, as listed in Table~\ref{tab:graph-stats}. Sourced from various publicly available repositories, these datasets represent road networks, web graphs, co-purchasing networks, and social networks \cite{boldi2011layered, leskovec2016snap}, and range from 1 million to 1 billion edges. They are commonly used in evaluations of \textsc{gpu}-based graph algorithms \cite{emc, chirayu_bcc, hooking_shortcutting}. Prior to our experiments, we preprocessed each graph by treating directed edges as undirected and removing multiple edges and self-loops, similar to other works.

Finally, to test the scalability of our algorithms, we check their ability to handle different batch update sizes. We use a wide range of batch update sizes, ranging from 100 to 100,000 edges. The batch edges are selected uniformly at random from the graph. 

A delete batch created uniformly at random, likely to have fewer tree edges compared to non-tree edges, and thus we choose (approximately 30 - 40\%) of the edges from tree edges and the rest from the non-tree edges for a delete batch.

For each graph and each batch size, we generate five independent instances. Each instance is executed five times independently, totaling 25 individual executions. The speedup is computed by averaging the ratios of baseline execution time to our method’s execution time over these 25 runs. While multiple batch sizes were evaluated to analyze performance trends, we report results primarily for batch size 10,000 for brevity.

\begin{table}[htbp]
\centering
\caption{Statistics of graphs used in the experiments}
\label{tab:graph-stats}
\begin{tabular}{cccc}
\hline
\textbf{Dataset} & \textbf{V} & \textbf{E} & \textbf{Abv.} \\
\hline

\multicolumn{4}{c}{\textbf{Social Networks}} \\
\hline
higgs-twitter       & 456K      & 14.8M     & TW \\
hollywood-2009      & 1.1M      & 56.3M     & HW \\
com-Orkut           & 3.07M     & 117M      & CO \\
soc-LiveJournal1    & 4.8M      & 68M       & LJ \\
\hline

\multicolumn{4}{c}{\textbf{Web Graphs}} \\
\hline
uk-2002             & 18.5M     & 298M      & UK-02 \\
arabic-2005         & 22.7M     & 639M      & AR    \\
uk-2005             & 39.4M     & 936M      & UK-05 \\
\hline

\multicolumn{4}{c}{\textbf{Road Networks}} \\
\hline
Road-USA            & 23.9M     & 58M       & RU \\
europe\_osm         & 50.9M     & 54.1M     & EO \\
\hline

\multicolumn{4}{c}{\textbf{Kronecker (Synthetic) Graphs}} \\ 
\hline
kron\_g500-logn18   & 262.1K    & 10.6M     & KR-18 \\
kron\_g500-logn19   & 524K      & 22M       & KR-19 \\
kron\_g500-logn20   & 1.0M      & 45M       & KR-20 \\
kron\_g500-logn21   & 2.1M      & 91M       & KR-21 \\
\hline

\end{tabular}
\end{table}

\begin{figure}[ht]
\centering
\includegraphics[width=1\linewidth]{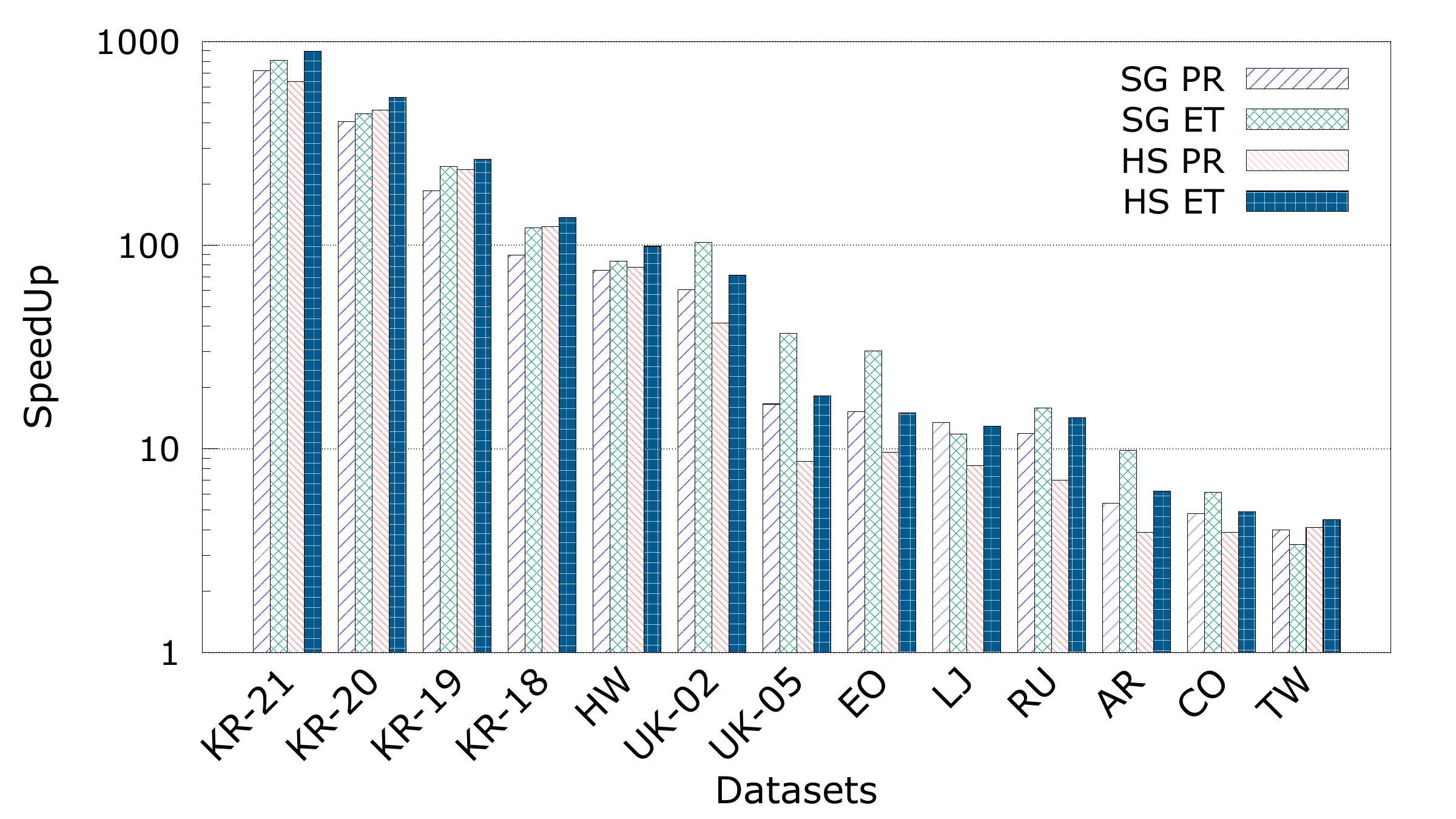}
\caption{Speedup: Our algorithms w.r.t static \textsc{gpu bfs} for insert}
\label{insertion_speedup}
\end{figure}

\begin{figure}[ht]
\centering
\includegraphics[width=1\linewidth]{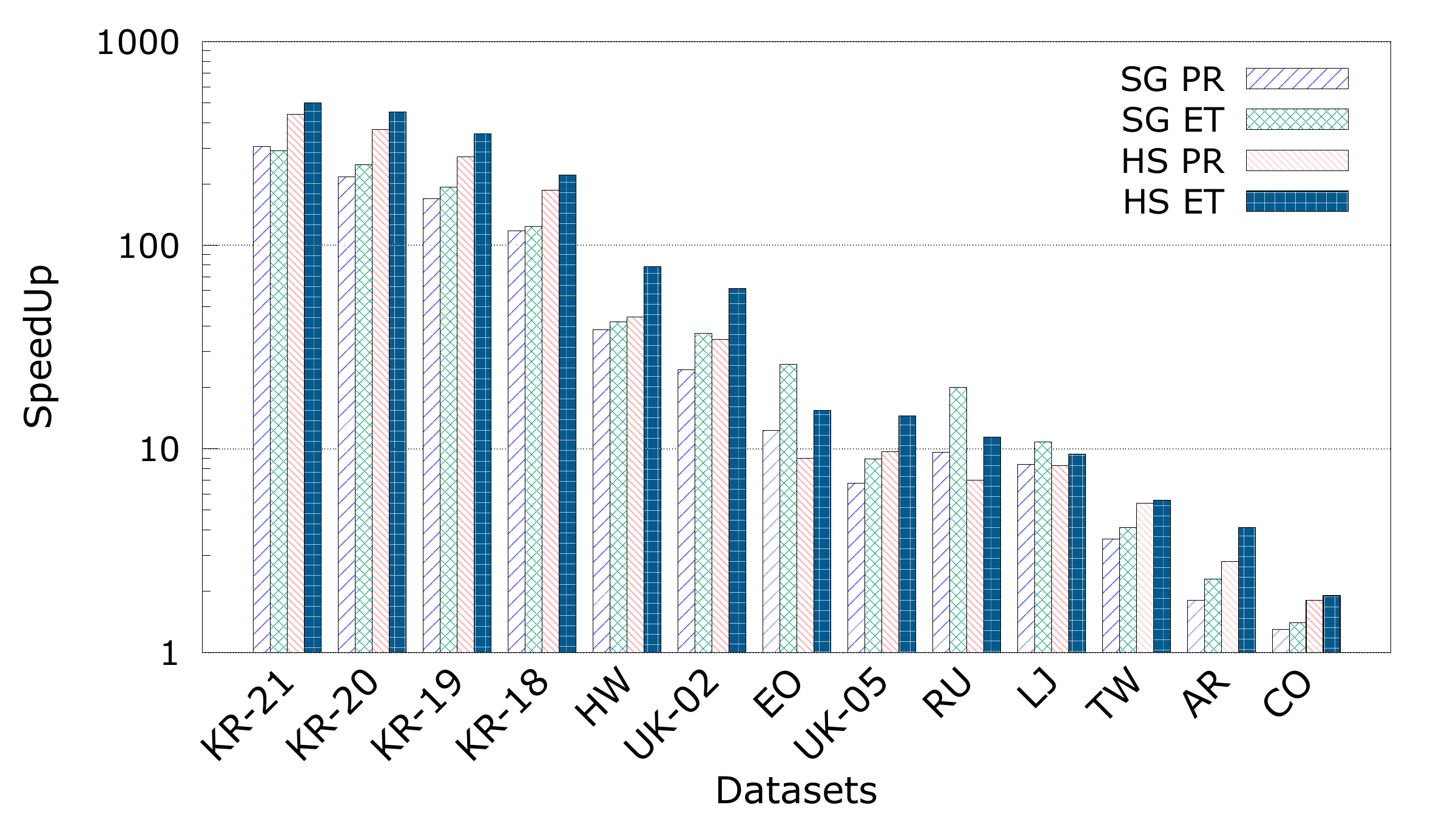}
\caption{Speedup: Our algorithms w.r.t static \textsc{gpu bfs} for delete}
\label{deletion_speedup}
\end{figure}

\begin{figure*}[h!]
\centering
\includegraphics[width=0.9\linewidth]{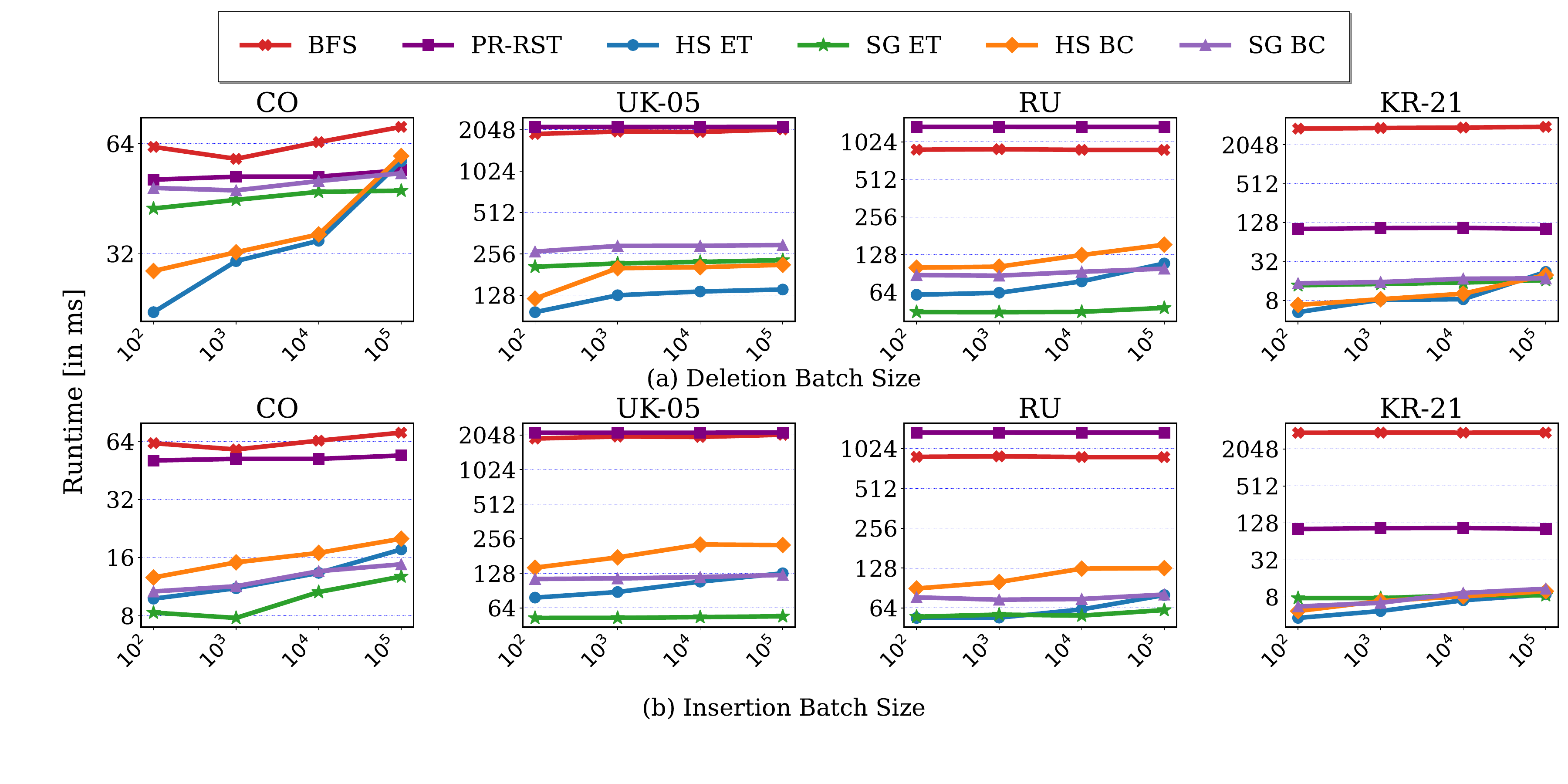}

\caption{Runtime comparison of baseline and proposed algorithms across various batches for (a) Deletion operation and (b) Insertion operation.}

\label{deletion_combined}
\end{figure*}

\subsection{Comparison with Static Parallel Algorithms}

In this section, we compare the performance of our proposed algorithms against two static approaches: the parallel \textsc{bfs} algorithm, which is a well-known and commonly used baseline for graph traversal, and the multi-core \textsc{pr-rst} algorithm proposed by Cong et al.~\cite{pr_rst}.

\textbf{Comparison with Parallel \textsc{BFS}.}
Parallel \textsc{bfs} algorithms typically assume that the graph is connected; however, when it is not, \textsc{bfs} must be initiated separately for each disconnected component, leading to multiple sequential executions. To eliminate such serial processing, we first add or delete the necessary batch of edges and then identify all disconnected components using the method described in~\cite{hooking_shortcutting}. We subsequently use stream compaction to select all unique representatives and apply parallel \textsc{bfs} starting from these representatives.

Our results show a maximum speedup of $530\times$ for edge deletion (with an average speedup of $160\times$) and $900\times$ for edge insertion (with an average speedup of $200\times$), as illustrated in Figures~\ref{insertion_speedup} and~\ref{deletion_speedup}.

The observed improvements in running times can be attributed to the dynamic setting.
The size of the largest connected component and the depth of the tree continually change as new batches of edges are inserted or deleted. This variability affects the performance of \textsc{bfs} due to its dependence on the graph's diameter~\cite{emc}. However, all of our proposed algorithms, being independent of the tree's depth, demonstrated significant speedup compared to the baseline, making them more suitable for real-world dynamic graphs.

\textbf{Comparison with  \textsc{pr-rst}.}
\textsc{pr-rst} is a parallel algorithm proposed by by Cong et al. \cite{pr_rst} for multi-core \textsc{cpu}s. While \textsc{bfs} has been extensively studied and widely implemented, the multi-core \textsc{pr-rst} algorithm has not yet seen widespread adoption particularly on \textsc{gpu}s, to the best of our knowledge. To enable a fair comparison, we ported the original multi-core implementation to the \textsc{gpu} ourselves.

As shown in Figure~\ref{deletion_combined}, our best-performing algorithm achieves, on average, a $13\times$ speedup over \textsc{pr-rst} for deletions (maximum $30\times$) and an $18\times$ speedup for insertions (maximum $41\times$). This performance gap is largely due to the fact that \textsc{pr-rst} is a static algorithm that recomputes the entire rooted spanning forest after every update, without leveraging prior computations. Additionally, the construction of the ancestor array (Section~\ref{BC_path}) results in numerous uncoalesced memory accesses, which further degrade \textsc{gpu} performance.

\subsection{Scalability Analysis: Impact of Increasing Batch Sizes on Performance}
After evaluating the speedups of our proposed algorithms, we now focus on studying their scalability. Specifically, our goal is to understand how the performance of the algorithms changes with increasing batch sizes.

We conducted experiments using batch sizes of 100, 1000, 10,000, and 100,000 edges on the largest graph of each category of datasets in Table~\ref{tab:graph-stats} and running times are reported in Figure ~\ref{deletion_combined}.
The results demonstrate that our algorithms maintain strong scalability, with minimal variation in running times across different batch sizes, achieving an average throughput of 2M insertions and 1.4M deletions per second across all batch sizes respectively. This suggests that our algorithms can efficiently handle larger batches without significant performance degradation.

\begin{figure*}[!tbh]
  \centering
  \begin{subfigure}[b]{0.325\textwidth}
    \includegraphics[width=\textwidth]{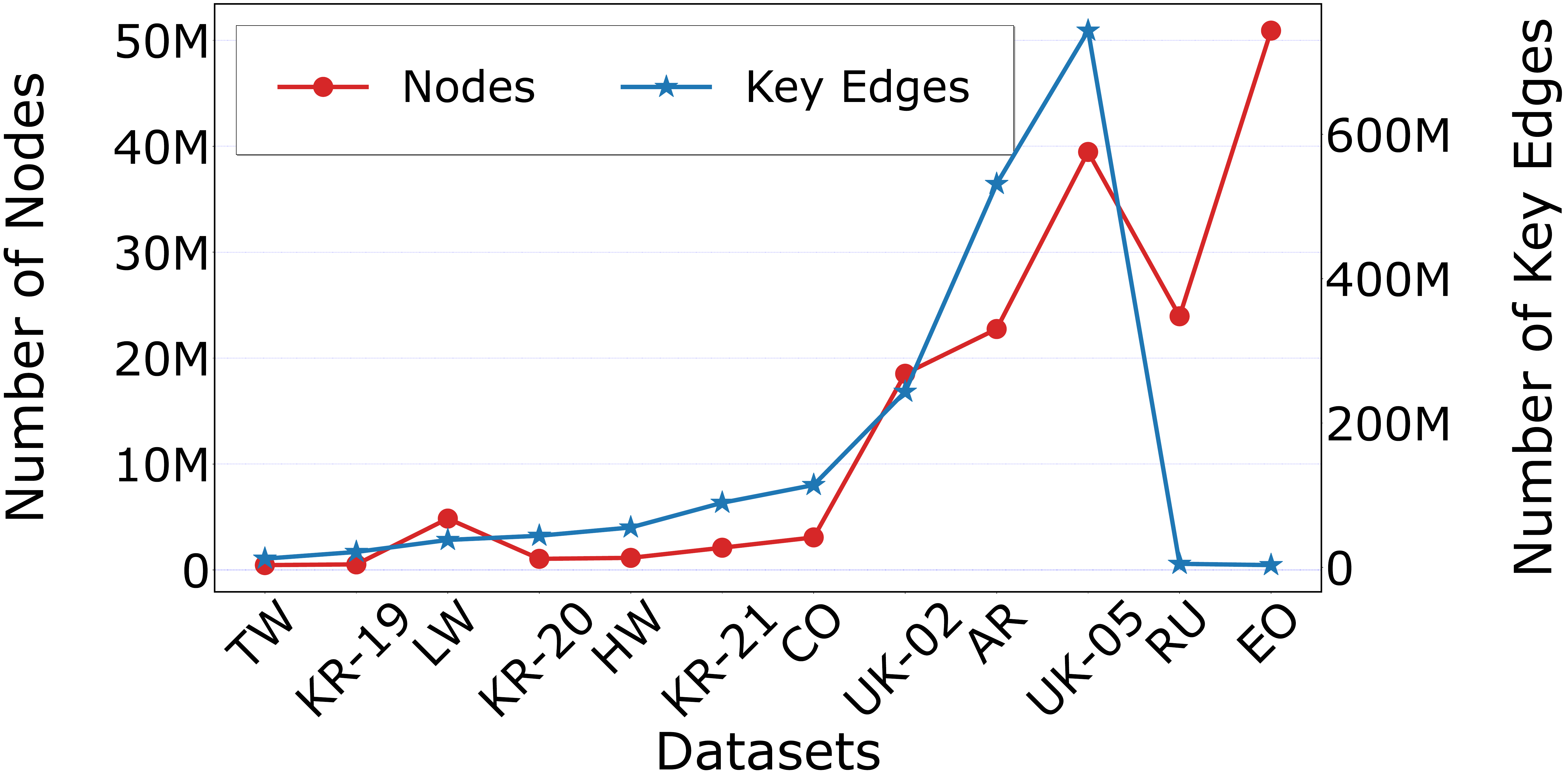}
    \caption{Vertices vs.\ key edges for deletion (batch size: 10,000).}
    \label{fig:behaviour-plot-1}
  \end{subfigure}\hfill
  \begin{subfigure}[b]{0.315\textwidth}
    \includegraphics[width=\textwidth]{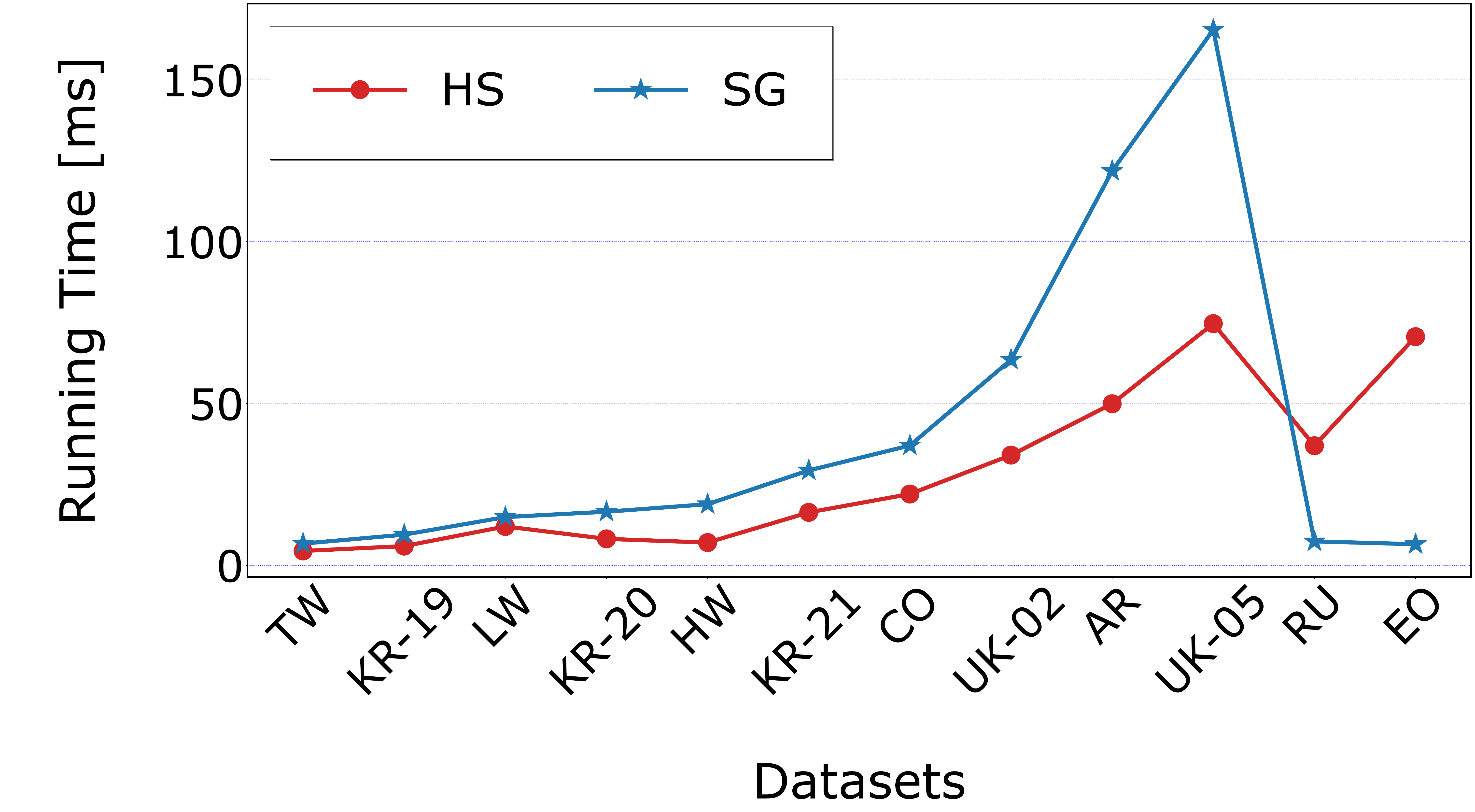}
    \caption{Running times for deletion operations.}
    \label{fig:behaviour-plot-2}
  \end{subfigure}\hfill
  \begin{subfigure}[b]{0.315\textwidth}
    \includegraphics[width=\textwidth]{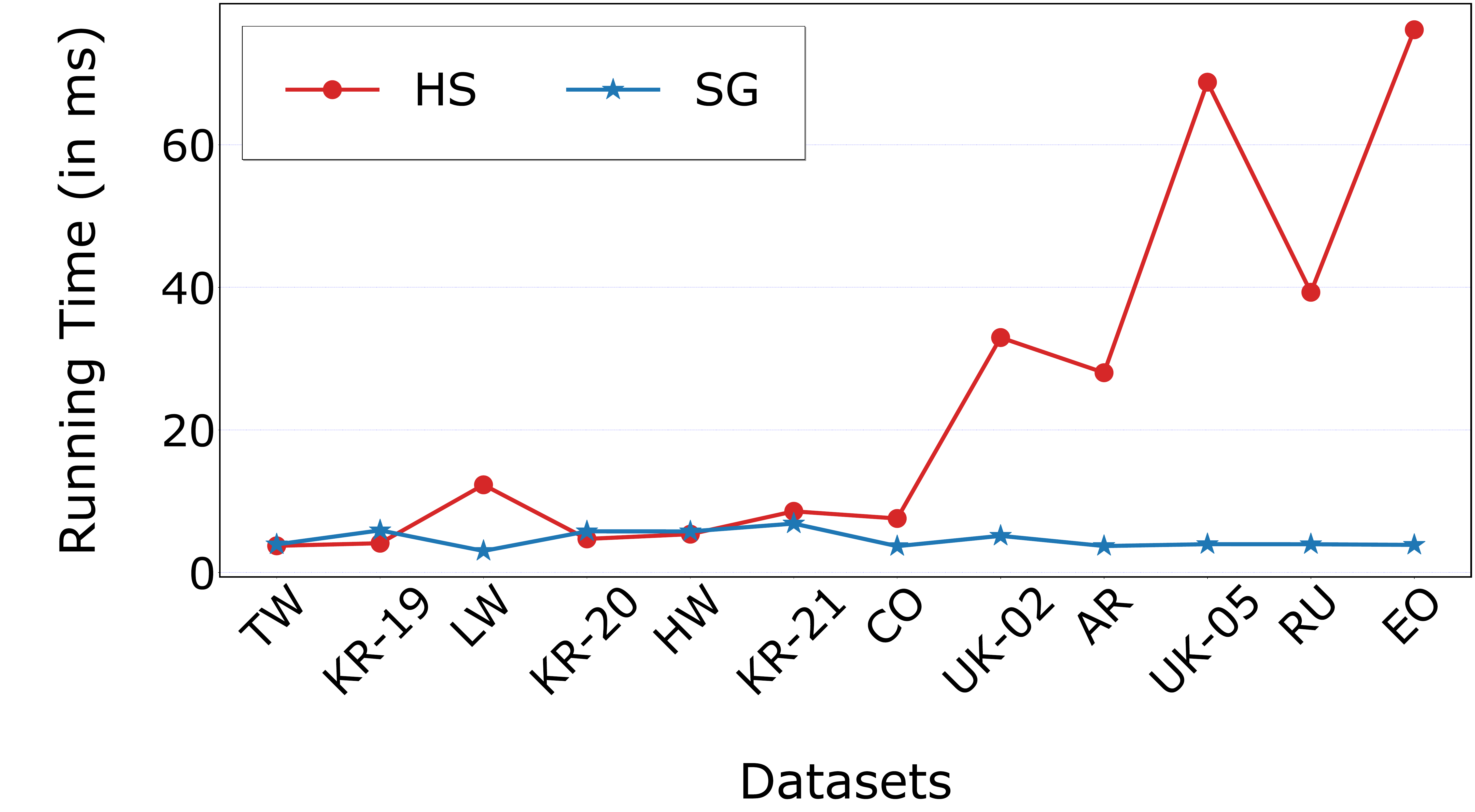}
    \caption{Running times for insertion operations.}
    \label{fig:behaviour-plot-3}
  \end{subfigure}

  \caption{HS performance depends on the number of nodes ($n$), whereas SG performance depends on the number of key edges ($m$). For insertion,  the number of key edges equals the batch size, which is fixed across datasets.}
  \label{fig:behaviour-analysis-combined}
\end{figure*}

\subsection{Influencing Factors in proposed algorithms}
In this section, we analyze the factors influencing the performance of our algorithms. While all algorithms demonstrated significant speedups and scalability, running times varied across different batches and operations. We now identify the key factors that contribute to these variations. 

Although deletions theoretically introduce an $O(\log E)$ overhead, our experiments reveal that this accounts for only 10-15\% of the total runtime. This is largely because the compaction algorithm is highly optimized for GPUs due to its effective parallelization techniques. However, most of the time (approximately 50\%) was spent on identifying oriented replacement edges, while path reversal accounted for around 30\%, except for insertions in SG-based algorithms. Interestingly, despite both broadcasting and the Eulerian Tour having the same $O(\log n)$ complexity for $Path\ Reversal$, the latter demonstrated superior practical performance. We now explore the rationale behind these observed behaviors.

\noindent \textbf{Hooking-Shortcutting vs SuperGraph.}
For \textit{deletions} (Figures~\ref{fig:behaviour-plot-1},~\ref{fig:behaviour-plot-2}), \textsc{hs} runtime scales with the number of vertices $n$, while \textsc{sg} scales with the number of key edges $m$. Consequently, \textsc{hs} outperforms \textsc{sg} when $n < m$, and vice versa. This distinction becomes critical for dense graphs, where $m$ can reach $O(n^2)$, making \textsc{sg} increasingly inefficient while \textsc{hs} maintains stable performance.

For \textit{insertions} (Figure~\ref{fig:behaviour-plot-3}), \textsc{sg} runtime remains nearly constant since the number of key edges equals the fixed batch size, whereas \textsc{hs} runtime continues to vary with $n$. As a result, \textsc{sg} consistently outperforms \textsc{hs} for insertions, even on dense graphs, since only the inserted batch is processed.

In summary, \textsc{hs} is preferable when $n \ll m$, while \textsc{sg} is preferable when $m \ll n$.

Regarding path reversal, we notice that the Eulerian Tour technique outperforms Broadcasting by a factor of $2-3\times$. This performance gap is primarily due to uncoalesced memory accesses in Broadcasting and number the space complexity involved. In particular, Eulerian Tour technique requires $O(n)$ space, where as Broadcasting requires $O(n\log n)$.
\section{Conclusion}  
We presented four algorithms for the fully dynamic update of rooted spanning trees in a many-core (GPU) environment. Our algorithms focus on repairing the spanning tree rather than reconstructing it from scratch, using key techniques such as hooking and shortcutting, supergraph-based methods for identifying oriented replacement edges, and Eulerian tour technique for path reversal. Our results demonstrate that incremental tree maintenance is a highly effective strategy for processing massive, evolving graphs on modern accelerators.

\section{Future Work}
Several promising directions for future research remain. First, we plan to extend our algorithm to support \textit{fully dynamic connectivity} beyond spanning trees. Second, we aim to optimize the dynamic updates of auxiliary data structures, such as Euler Tour Trees, to enable more efficient batch-parallel queries. Finally, we intend to explore multi-GPU implementations to further scale performance for massive, evolving real-world graphs.

\section*{Acknowledgment}
The authors would like to thank the anonymous reviewers for their insightful feedback. We also thank Dr. Dip Sankar Banerjee (IITJ) for providing access to the DGX system support used in this work.

\bibliographystyle{IEEEtran}
\bibliography{text/References}

\end{document}